\documentclass[a4paper,UKenglish,cleveref, autoref, thm-restate]{lipics-v2021}
\usepackage{todonotes}
\usepackage{float}

\newcommand{\BiFore}{\texttt{foreL}}
\newcommand{\BiBack}{\texttt{backL}}
\newcommand{\Fore}{\texttt{fore}}
\newcommand{\Back}{\texttt{back}}
\newcommand{\ForeAll}{\texttt{forePair}}

\newcommand{\col}{\texttt{col}}
\newcommand{\subIB}{\texttt{SubIB}}
\newcommand{\subIF}{\texttt{SubIF}}
\newcommand{\inter}{\texttt{intervals}}
\DeclareMathOperator{\rank}{\texttt{rank}}
\DeclareMathOperator{\select}{\texttt{select}}
\newcommand{\valB}{\texttt{ValBack}}
\newcommand{\valF}{\texttt{ValFore}}
\newcommand{\width}{\texttt{w}}
\newcommand{\height}{\texttt{h}}
\newcommand{\word}{w}
\newcommand{\occ}{\texttt{occ}}
\newcommand{\paInterval}{\texttt{pa-interval}}
\newcommand{\sPA}{\texttt{sPA}}
\newcommand{\sValS}{\texttt{sVal}}
\newcommand{\sCountS}{\texttt{sCount}}
\newcommand{\PA}{\texttt{PA}}
\newcommand{\PBWT}{\texttt{PBWT}}

\newcommand{\newR}{\tilde{r}}

\title{Optimal-Time Mapping in Run-Length Compressed PBWT} %TODO Please add

\titlerunning{Optimal-Time Mapping in Run-Length Compressed PBWT} %TODO optional, please use if title is longer than one line

\author{Paola Bonizzoni}{Department of Computer Science, University of Milano-Bicocca, Italy}{paola.bonizzoni@unimib.it}{https://orcid.org/0000-0001-7289-4988}{}

\author{Davide Cozzi}{Department of Computer Science, University of Milano-Bicocca, Italy}{d.cozzi@campus.unimib.it}{https://orcid.org/0000-0003-2439-0608}{}

\author{Younan Gao}{Department of Computer Science, University of Milano-Bicocca, Italy}{younan.gao@unimib.it}{https://orcid.org/0000-0003-4984-2551}{}

\authorrunning{Bonizzoni et al.} %TODO mandatory. First: Use abbreviated first/middle names. Second (only in severe cases): Use first author plus 'et al.'

\Copyright{Paola Bonizzoni, Davide Cozzi, and Younan Gao} %TODO mandatory, please use full first names. LIPIcs license is "CC-BY";  http://creativecommons.org/licenses/by/3.0/

\ccsdesc[500]{Theory of computation~Data structures design and analysis}

\keywords{PBWT, LF-Mapping, prefix searches, run-length encoding} %TODO mandatory; please add comma-separated list of keywords

\category{} %optional, e.g. invited paper

\relatedversion{} %optional, e.g. full version hosted on arXiv, HAL, or other respository/website
\acknowledgements{The authors gratefully acknowledge Travis Gagie for bringing prefix searches to their attention.}%optional

\nolinenumbers %uncomment to disable line numbering

\EventEditors{John Q. Open and Joan R. Access}
\EventNoEds{2}
\EventLongTitle{42nd Conference on Very Important Topics (CVIT 2016)}
\EventShortTitle{CVIT 2016}
\EventAcronym{CVIT}
\EventYear{2016}
\EventDate{December 24--27, 2016}
\EventLocation{Little Whinging, United Kingdom}
\EventLogo{}
\SeriesVolume{42}
\ArticleNo{23}
\begin{document}

\maketitle

%TODO mandatory: add short abstract of the document
\begin{abstract}
The Positional Burrows--Wheeler Transform (PBWT) is a data structure designed for efficiently representing and querying large collections of sequences, such as haplotype panels in genomics. 
Forward and backward stepping operations---analogues to LF- and FL-mapping in the traditional BWT---are fundamental to the PBWT, underpinning many algorithms based on the PBWT for haplotype matching and related analyses. 
Although the run-length encoded variant of the PBWT (also known as the $\mu$-PBWT) achieves $O(\newR)$-word space usage, where $\newR$ is the total number of runs, no data structure supporting both forward and backward stepping in constant time within this space bound was previously known. 
In this paper, we consider the multi-allelic PBWT that is extended from its original binary form to a general ordered alphabet $\{0, \dots, \sigma-1\}$.
We first establish bounds on the size $\newR$ and then introduce a new $O(\newR)$-word data structure built over a list of haplotypes $\{S_1, \dots, S_\height\}$, each of length $\width$, that supports constant-time forward and backward stepping. 

We further revisit two key applications---haplotype retrieval and prefix search---leveraging our efficient forward stepping technique.
Specifically, we design an $O(\newR)$-word space data structure that supports haplotype retrieval in $O(\log \log_{\word} h + \width)$ time.
For prefix search, we present an $O(\height + \newR)$-word data structure that answers queries in $O(m' \log\log_{\word} \sigma + \occ)$ time, where $m'$ denotes the length of the longest common prefix returned and $\occ$ denotes the number of haplotypes prefixed the longest prefix.
\end{abstract}

\newpage
\section{Introduction}
\label{sect-intro}

\textbf{Background and motivation.} The \emph{Positional Burrows--Wheeler Transform (PBWT)}~\cite{Durbin2014-dd} is a data structure designed for efficiently representing and querying large collections of sequences, such as haplotype panels in genomics. 
%Originally proposed by Durbin~\cite{Durbin2014-dd}, the PBWT  stores a set of $\height$ haplotypes across $\width$ variant sites in a $\height \times \width$ binary matrix by reporting at each column $j$   the permutations of the $\height$ rows   arranged in co-lexicographic order according to the prefixes of the haplotypes up to column $j-1$. 
Originally proposed by Durbin~\cite{Durbin2014-dd}, the PBWT stores a set of $\height$ haplotypes across $\width$ variant sites in a $\height \times \width$ binary matrix, where the rows at each column $j$ are arranged in co-lexicographic order according to the prefixes of the haplotypes up to column $j-1$.
%This ordering enables efficient querying between a given haplotype and the panel \textcolor{red}{and} the identification of set-maximal exact matches (SMEMs).
This ordering facilitates efficient querying between a given haplotype and the panel, enabling the identification of set-maximal exact matches (SMEMs).

Two fundamental operations are defined on the PBWT: \emph{forward stepping} and \emph{backward stepping}. Forward stepping, denoted by $\Fore[i][j]$, maps the position of a haplotype in the permutation of the PBWT at site $j$ to its position in the permutation at site $j{+}1$, while backward stepping, denoted by $\Back[i][j]$, performs the inverse mapping, tracing a haplotype’s position at site $j{+}1$ back to its position at site $j$. 
These operations play an essential role in many PBWT-based algorithms for haplotype matching and analysis.

Although the PBWT enables efficient computations, its memory usage grows rapidly with large haplotype datasets, posing a challenge for population-scale cohorts like the UK Biobank \cite{halldorsson2022sequences}.
To address this limitation, the \emph{$\mu$-PBWT}~\cite{Cozzi2023-iv}, also discussed in \cite{Bonizzoni2023-cj}, was introduced as a compressed variant of the PBWT that leverages \emph{run-length encoding (RLE)} to reduce space usage. A \emph{run} in a sequence is defined as a maximal contiguous block of identical symbols, and the $\mu$-PBWT$'$s storage requirement is $O(\newR)$ words, where $\newR$ denotes the total number of runs in the corresponding PBWT across all $\width$ sites. %This run-length compression allows the $\mu$-PBWT$'$s memory footprint to be only a fraction of that of the original PBWT, making it feasible to index and query massive haplotype panels such as those in the UK Biobank.
This run-length compression reduces the $\mu$-PBWT's memory usage to only a fraction of that of the original PBWT, making it feasible to index and query massive haplotype panels such as those in the UK Biobank \cite{halldorsson2022sequences}.

Beyond space efficiency, the $\mu$-PBWT's structure has proven particularly useful for downstream applications such as computing matching statistics~\cite{Cozzi2023-iv, Bonizzoni2024-dj}, Minimal Positional Substring Covers (MPSC)~\cite{Bonizzoni2024-dj}, and SMEMs~\cite{Cozzi2023-iv, Bonizzoni2024-dj}. Each of these applications relies on repeated forward and backward stepping operations. However, within the $O(\newR)$-space bound of the $\mu$-PBWT, no data structure supporting both operations in constant time was known prior to our work. Achieving faster stepping operations would directly improve the performance of all these applications and further enhance the efficiency of run-length compressed PBWT indexing for large-scale genomic analyzes.

\smallskip
\noindent
\textbf{Related work.} 
Durbin~\cite{Durbin2014-dd} noted that, due to the co-lexicographic ordering of the PBWT, the {forward} (resp, {backward}) stepping operation is the natural analogue of the LF- (resp, FL-) mapping in the classical Burrows--Wheeler Transform (BWT). 
Gagie et al. \cite[Lemma 2.1]{GagieNP20} presented a data structure requiring $O(r)$ words of space, built over a text of length $n$, that supports LF- and FL-mappings on the BWT in $O(\log \log_{\word} (n/r))$ time, where $r$ is the number of runs in the BWT of the text and $\word$ is the number of bits in a machine word.
%
% Specifically, let $C_j$ for $1 \le j \le \width$ denote the number of occurrences of bit~$0$ at site~$j$ of the \textcolor{red}{(bi-allelic)} PBWT. 
% Given the position $i$ in the permutation of the PBWT at site~$j$, let $x$ denote the number of occurrences of the bit stored at position~$i$ among the first $i$ entries at site~$j$. 
% The forward stepping operation, denoted by $\Fore[i][j]$, computes the corresponding position at site~$j{+}1$ as $x$ if the bit at position~$i$ is~$0$, and as $C_j + x$ otherwise. 
% The backward stepping operation, denoted by $\Back[i][j]$, can be implemented symmetrically. 
By applying this data structure to each column of the PBWT, we can achieve, for every site~$j$, an $O(r_j)$-word data structure that supports both forward and backward stepping queries in $O(\log \log_{\word} (\height / r_j))$ time, where $r_j$ denotes the number of runs at site~$j$.

Nishimoto and Tabei~\cite{NishimotoT21} recently proposed the \emph{move structure} to accelerate LF-mapping operations on the BWT. 
Their approach first divides the $r$ runs in the BWT into at most $2r$ \emph{sub-runs}, and then constructs an $O(r)$-word data structure over these sub-runs. 
Given a position $i$ in the BWT and the index of the sub-run containing $i$, the structure can compute LF$(i)$ and determine the index of the sub-run containing LF$(i)$ in constant time.

Typically, in applications of the PBWT, forward and backward stepping are performed iteratively.
For example, to traverse a haplotype in the PBWT from site~$j$ to site~$j'$, the forward stepping procedure is invoked sequentially $j' - j$ times.
However, efficient (e.g., constant-time) forward and backward stepping in the run-length encoded PBWT cannot be achieved simply by applying the move structure independently at each site.
Indeed, the move structure only returns the index of the sub-run at site~$j$ that contains $\Fore[i][j]$, whereas continuing the forward step from site~$j+1$ to site~$j+2$ requires the index of the sub-run at site~$j+1$ that contains $\Fore[i][j]$.
See Figure \ref{fig:pbwt-pa-exp} (in Appendix \ref{app-pbwt-pa-exp}) for an example.
Hence, new methods are needed to accelerate $\Fore$ and $\Back$ queries while maintaining the run-length encoded PBWT.

% \begin{figure}[!t]
%   \centering
%   \begin{subfigure}[b]{0.3\textwidth}
%     \centering
%     \includegraphics[width=\textwidth]{figs/hap-exp}
%     \caption{The Input haplotypes}
%     \label{fig:hap-exp}
%   \end{subfigure}
%   \hfill
%   \begin{subfigure}[b]{0.3\textwidth}
%     \centering
%     \includegraphics[width=\textwidth]{./figs/pbwt-exp}
%     \caption{The PBWT with runs}
%     \label{fig:pbwt-exp}
%   \end{subfigure}
%   \hfill
%   \begin{subfigure}[b]{0.3\textwidth}
%     \centering
%     \includegraphics[width=\textwidth]{./figs/pa-exp}
%     \caption{PA}
%     \label{fig:pa-exp}
%   \end{subfigure}
%   \caption{Example of $\PBWT$ and $\PA$ built for bi-allelic haplotypes. The operation $\Fore[5][4]$ returns $2$, and the position 2 is in the first run of the fifth column of the PBWT.}
%   \label{fig:pbwt-pa-exp}
% \end{figure}

Two key applications of the $\Back$ and $\Fore$ queries on the PBWT are \emph{haplotype retrieval} and \emph{prefix search}~\cite{Gagie2022-js}.
In haplotype retrieval, the goal is to extract any haplotype from the PBWT.
In prefix search, the objective is to find the longest common prefix $P[1..m']$ between any haplotype and a query pattern $P[1..m]$, and to list the indices of all haplotypes prefixed by $P[1..m']$.
Using the data structure from~\cite[Lemma~2.1]{GagieNP20}, built over each column of the PBWT, haplotype retrieval can be performed in $O(\width \log \log_{\word} h)$ time by invoking forward stepping iteratively $\width$ times, with an overall space usage of $O(\newR)$ words.
Gagie et al.~\cite{Gagie2022-js} present a textbook solution based on the PBWT that uses $O(\height \cdot \width)$ words of space and supports each prefix query in $O(m \log \height + \occ)$ time, where $\occ$ denotes the number of reported indices.
They also provide various time-space tradeoffs in the same work.
Nishimoto and Tabei~\cite{NishimotoT21} propose a data structure for prefix search that combines a BWT built over all haplotypes, a \emph{compact trie}~\cite{Morrison1968}, the move data structure, and at most $2\height - 1$ marked positions. The structure supports computing the longest prefix $P[1..m']$ in $O(m')$ time, using $O(r^{*} + \height)$ words of space, where $r^{*}$ denotes the number of runs in the BWT of the concatenated sequence.

\smallskip
\noindent
\textbf{Our  results.} 
We consider the multi-allelic PBWT as described in \cite{naseri2019multi}. We propose a ``move data structure''$-$style solution specifically tailored for the PBWT. Our approach introduces a simple yet effective algorithm that partitions the $\newR$ runs of the PBWT into at most $2\newR$ sub-runs across all $\width$ columns of the PBWT (Section \ref{sect-three-overlap-normal}).
We then design an $O(\newR)$-word data structure over the sub-runs (Theorem \ref{theorem-PBWT-r}) that supports $O(1)$-time computation of both forward and backward stepping operations, allowing to be invoked iteratively (Section \ref{sect-sub-run-back}). %More importantly, our solution allows these stepping queries to be invoked iteratively. 
 
As a first application, we present a PBWT-based solution for prefix searches. Specifically, assuming all haplotypes have the same length, we design an $O(\newR + \height)$-word data structure that finds the longest common prefix $P[1..m']$ between any haplotype and a query pattern $P[1..m]$ in $O(m' \log \log_{\word} \sigma + \occ)$ time.  
When the haplotypes are sorted lexicographically, we can further reduce the space complexity from $O(\newR + \height)$ to $O(\newR)$.
Note that the problem setting considered by Nishimoto and Tabei~\cite{NishimotoT21} allows haplotypes of varying lengths. We also show that our PBWT-based solution can be adapted to this more general case.  
While our query time is not as efficient as that of Nishimoto and Tabei~\cite{NishimotoT21}, our approach has the advantage of not requiring a compact trie.
As a second application, we design an $O(\newR)$-word data structure to represent the haplotypes, supporting retrieval of any haplotype $S_i$ ($1 \leq i \leq \height$) in $O(\log \log_{\word} \height + \width)$ time (Theorem~\ref{theorem-PBWT-extract}).

% Before introducing our ``move''-like $O(\newR)$-word data  structure, we complement the results with lower bounds on the $O(\newR)$ in order to highlight connections between repetitiveness in the input data set $S$ and the values of measure $\newR$. This could be of interest for further future investigations of $\newR$ related to clustering of identical haplotype sequences.
Before introducing our ``move''-like $O(\newR)$-word data structure, we complement our results with lower and upper bounds on the measure $\newR$.
We establish a connection between $\newR$ and the number $\height''$ of adjacent haplotype pairs $(S_i, S_{i+1})$ such that $S_i \ne S_{i+1}$.
Specifically, we show that $\newR \ge \height'' + 1$ and $\newR\le \width(\height'' + 1)$.

%the number $r_j$ of runs in each column $j$ of the PBWT is at most $(\height'' + 1)$.
%These bounds reveal the relationship between the repetitiveness of the input data set $S$ and the value of $\newR$, suggesting that clustering identical haplotype sequences prior to constructing the PBWT may lead to more space-efficient representations.

% Before introducing our ``move''-like $O(\newR)$-word data structure, we complement our results with lower and upper bounds on the size $\newR$.
% We establish the connection between $\newR$ is relevant to the number $\height''$ of the pairs of $S_i$ and $S_{i+1}$ such that $S_i\ne S_{i+1}$.
% Specifically, we show that $\newR\ge \height''+1$ and $\newR \le \width(\height''+1)$.
% These bounds highlight connections between repetitiveness in the input data set $S$ and the values of measure $\newR$. It indicates that clustering identical haplotype sequences together before constructing the PBWT might result in more efficient space.

%This could be of interest for further future investigations of $\newR$ related to clustering of identical haplotype sequences.

\smallskip
\noindent
\textbf{Paper organization.} In Section~\ref{sect-prelim}, we introduce the notation and preliminary results used throughout the paper.
In Section \ref{sect-bounds-newR}, we establish lower and upper bounds on the size $\newR$.
Section~\ref{sect-three-overlap-normal} defines the \emph{three-overlap} constraint over two lists of intervals and presents an algorithm that divides intervals in one list into sub-intervals satisfying this constraint.  
Applying this algorithm, we describe in Section~\ref{sect-sub-run-back} how to divide PBWT runs into sub-runs for $\Back$ and $\Fore$ queries.  
In Section~\ref{sect-ds-fore-back}, we design data structures over these sub-runs and develop the corresponding algorithms for $\Back$ and $\Fore$ queries.  
Section~\ref{sect-app} discusses two applications of $\Back$ and $\Fore$ queries: haplotype retrieval and prefix searches.  
Finally, Section~\ref{sect-conclusion} concludes the paper and outlines directions for future work.

\section{Preliminaries}
\label{sect-prelim}

All results in this paper are presented under the word RAM (random-access machine) model. 
We evaluate the space cost of data structures in words.
Each word is of $\word$ bits.

\smallskip
\noindent
\textbf{Notations.}
We denote by $[i,j]$ the interval of integers ${i, i+1, \dots, j}$, and define $[i,j] = \emptyset$ if $j < i$.
Given an interval $\mathsf{I} = [i,j]$, we denote its left and right endpoints by $\mathsf{I}.b$ and $\mathsf{I}.e$, respectively, so that $\mathsf{I}.b = i$ and $\mathsf{I}.e = j$.
For any matrix $A$, we denote by $\col_j(A)$ its $j$-th column and by $\col_j(A)[i]$ the entry $A[i][j]$.
Let $M$ be the matrix that stores $S_1 , S_2, \dots, S_\height$ in rows $1, 2, \dots, \height$. 
We note that the input haplotypes might not be pairwise distinct.
%We assume that all haplotypes are distinct. Hence, it follows that $\sigma^w\ge \height$.

Consider an ordered alphabet $\{0, \dots, \sigma-1\}$ with $0 < 1 < \dots < \sigma-1$. 
A \emph{string} $\alpha$ over this alphabet is a finite sequence of symbols from $\{0, \dots, \sigma-1\}$, that is, $\alpha = \alpha[1]\alpha[2]\dots\alpha[|\alpha|]$, where $|\alpha|$ denotes the length of $\alpha$, and $\alpha[i] \in \{0, \dots, \sigma-1\}$ for all $1 \le i \le |\alpha|$. 
The empty string is denoted by $\varepsilon$. 
For indices $1 \le i \le j \le |\alpha|$, we denote by $\alpha[i..j]$ the \emph{substring} of $\alpha$ spanning positions $i$ through $j$ (and define $\alpha[i..j] = \varepsilon$ if $i > j$). 
The string $\alpha[1..i]$ is referred to as the $i$-th \emph{prefix} of $\alpha$, and $\alpha[i..|\alpha|]$ as the $i$-th \emph{suffix}, for $1 \le i \le |\alpha|$. 
A \emph{proper prefix} (respectively, \emph{proper suffix}) of $\alpha$ is a prefix (respectively, suffix) $\beta$ such that $\beta \ne \alpha$.

Given two strings $\alpha$ and $\beta$ over the alphabet set $\{0,\dots, \sigma-1\}$, we say that $\alpha$ is \emph{lexicographically smaller} than $\beta$, denoted by $\alpha\prec \beta$, if and only if one of the following holds: i) There exists an index $k$ such that $\alpha[i] = \beta[i]$ for all $i < k$, and $\alpha[k] < \beta[k]$; ii) or $\alpha$ is a proper prefix of $\beta$.
We say that $\alpha$ is \emph{co-lexicographically smaller} than $\beta$, denoted by $\alpha \prec_{colex} \beta$, if and only if one of the following holds: i) There exists an index $k$ such that $\alpha[|\alpha|-i+1]=\beta[|\beta|-i+1]$ for all $1\le i <k$, and $\alpha[|\alpha|-k+1]<\beta[|\beta|-k+1]$; ii) or $\alpha$ is a proper suffix of $\beta$.

\smallskip
\noindent
\textbf{Predecessor queries.} Given a sorted list $S$ of integers, a predecessor query takes an integer $x$ as input and returns $\max \{ y \in S \mid y \leq x \}$ and its rightmost position in $S$.

\begin{lemma}\cite[Theorem A.1]{BelazzouguiN15}\label{lem-pred}
    Given an increasingly sorted list of $n'$ integers, drawn from the universe $\{0, \dots, \sigma-1\}$, there is a data structure that occupies $O(n' \log \sigma)$ bits of space and answers a predecessor query in $O(\log \log_{\word} \sigma)$ time.
\end{lemma}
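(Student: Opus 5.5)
This lemma is imported from \cite[Theorem A.1]{BelazzouguiN15}, so in the paper we simply invoke it. Still, here is how I would prove it from scratch. The plan is a y-fast-trie-style construction in which binary digits are replaced by digits of $\Theta(\log \word)$ bits. Two ingredients are needed: a hashed trie over these coarse digits, and fusion-tree nodes that resolve a predecessor within a node in constant time.

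\textbf{Step 1: the coarse-digit trie.} Write every key in $\{0,\dots,\sigma-1\}$ as a string of $L = \lceil \log \sigma / \log \word \rceil = O(\log_{\word} \sigma)$ digits, each of $\lfloor \log \word \rfloor$ bits, so each digit ranges over an alphabet of size at most $\word$. Consider the trie of these digit strings over the stored keys. For every depth $d \le L$, store all length-$d$ prefixes of stored keys in a perfect hash table. At each trie node, keep the following:
\begin{itemize}
\item its minimum and maximum descendant leaf;
\item a fusion node over the at most $\word$ digits labelling its children, which answers predecessor and successor among those children in $O(1)$ time.
\end{itemize}
The leaves are linked in sorted order.

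\textbf{Step 2: the query.} Given a query $x$, binary search over the depth $d \in [0,L]$ for the longest prefix of $x$ present in the hash tables. This costs $O(\log L) = O(\log\log_{\word}\sigma)$ hash lookups. At the deepest matching node, use its fusion node to find the predecessor child of the next digit of $x$. Taking that child's maximum leaf, or stepping once through the leaf list, gives the answer in $O(1)$ additional time.

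\textbf{Step 3: meeting the space bound.} As described, the trie has $O(n' L)$ nodes, and each node needs $O(\log \sigma)$ bits plus a fusion node. I would fix this with the standard bucketing trick:
\begin{itemize}
\item Partition the sorted keys into consecutive buckets of size $b = \Theta(L)$.
\item Build the trie only on the $O(n'/L)$ bucket representatives, which occupies $O(n' \log \sigma)$ bits overall.
\item Inside each bucket, store the keys in a fusion tree. Because $b \le \word$, the fusion tree answers predecessor in $O(\log_{\word} b) = O(1)$ time and uses $O(b \log \sigma)$ bits.
\end{itemize}
A query first locates the bucket through the top structure and then finishes inside that bucket.

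\textbf{Expected obstacle.} The delicate part is the space accounting when $\log \sigma < \word$: fusion nodes and hash tables must be charged in units of $\log\sigma$-bit fields rather than full words. Handling this requires packing keys into fields of $\log \sigma$ bits, and, when $\sigma$ is very small relative to $\word$, handling the degenerate case $L = O(1)$ directly with a single fusion tree or a small lookup table.
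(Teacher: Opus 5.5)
The paper does not prove this lemma; it imports it from \cite{BelazzouguiN15}. Your Steps~1--2 are correct and are the standard machinery behind such bounds: an x-fast trie over $\Theta(\log\word)$-bit digits, binary search on depth with $O(\log L)$ hash probes, then an $O(1)$ child-predecessor step plus one move in the leaf list.

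\textbf{The gap is in Step~3.} This is exactly the part that separates the statement from the textbook $O(n')$-\emph{word} bound.
\begin{itemize}
\item With buckets of size $\Theta(L)$, the top trie still has $\Theta((n'/L)\cdot L)=\Theta(n')$ nodes.
\item Each node carries hash-table entries, min/max pointers and a fusion node (or a $\word$-bit child bitmap), i.e.\ $\Theta(\word)$ bits. So the top level costs $\Theta(n'\word)$ bits.
\item Likewise, a fusion tree on $b$ keys occupies $\Theta(b)$ words, not $O(b\log\sigma)$ bits.
\end{itemize}
So Step~3 as written proves $O(n')$ words. That is weaker than the claim whenever $\log\sigma=o(\word)$. (It is, however, all the paper uses: in Theorem~\ref{theorem-PBWT-extract} the bound is immediately rounded to $O(\newR)$ words.)

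Your ``obstacle'' remedy does not close the gap. A fusion node needs $\Theta(1)$ full words (multiplier, masks, packed sketches) regardless of key length, so it cannot be charged in $\log\sigma$-bit units. And with $\Theta(n')$ top-level nodes, even one word per node is already too much.

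\textbf{The missing idea is to make the top level sparser rather than cheaper.}
\begin{itemize}
\item Take buckets of $b=\Theta(\word L/\log\sigma)$ keys, which is about $\word/\log\word$ when $\sigma\ge\word$. The top trie then has $O((n'/b)L)=O(n'\log\sigma/\word)$ nodes and hash entries. It can therefore afford $O(\word)$ bits each. Use one hash table keyed by (depth, prefix) so hashing overhead stays at $O(1)$ words.
\item Inside a bucket, drop fusion trees. Store the keys sorted and packed, $\Theta(\word/\log\sigma)$ per word with a spare test bit per field. A bucket then takes $O(L)$ words, i.e.\ $O(b\log\sigma)$ bits.
\item A query binary-searches the first keys of those $O(L)$ words in $O(\log L)$ time. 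It finishes with one word-parallel comparison and an msb/popcount in $O(1)$. The total stays $O(\log L)=O(\log\log_{\word}\sigma)$.
\item For $\sigma\le\word$, use a single packed word if $n'\log\sigma\le\word$. Otherwise use a $\sigma$-bit bitmap, which fits the bound since then $\sigma\le\word<n'\log\sigma$. Both give $O(1)$ time.
\end{itemize}
Storing each bucket's starting rank also yields the rightmost position that the paper's definition of a predecessor query requires.
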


\noindent
\textbf{Rank and select queries.}  
Given a sequence $A[1..n']$ over an alphabet $\{0, \dots, \sigma-1\}$, the operation $\rank_c(A, j)$ returns the number of occurrences of $c$ in $A[1..j]$, for $c \in \{0, \dots, \sigma-1\}$.  
The operation $\select_c(A, j)$ returns the position of the $j$-th occurrence of $c$ in $A$, for $1 \le j \le \rank_c(A, n')$, and returns $n'+1$ if $j > \rank_c(A, n')$.

\begin{lemma}\cite{BelazzouguiN15}\label{lem-rank-select}
There exists a data structure of $O(n')$ words built on $A[1..n']$ that supports $\rank$ queries in $O(\log \log_{\word} \sigma)$ time and $\select$ queries in $O(1)$ time.
\end{lemma}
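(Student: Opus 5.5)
This lemma is cited from Belazzougui and Navarro, so the plan is to reconstruct their construction from standard components. The central idea is to split $A$ into blocks of $\sigma$ consecutive positions, one per alphabet chunk, and to handle each $(\text{block},c)$ pair with a monotone sequence that supports predecessor search. We may assume $\sigma\le n'$. Otherwise we first map the alphabet to the symbols that actually occur, using a perfect hash table of $O(n')$ words. Then every query on the mapped alphabet costs no more than claimed.

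\emph{Global layer.} For each symbol $c$ we store a bitvector $B_c$ that encodes, in unary, the number of occurrences of $c$ in each block of length $\sigma$. The total length of all $B_c$ is $O(n'+\sigma\cdot n'/\sigma)=O(n')$ bits. We equip each $B_c$ with constant-time rank/select on bitvectors. This gives, in $O(1)$ time, both the number of occurrences of $c$ before a given block and the block containing the $j$-th occurrence of $c$.

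\emph{Local layer.} Inside each block $K$ we store:
\begin{itemize}
\item the permutation $\pi_K$ that lists the positions of the block stably sorted by symbol;
\item for each symbol $c$, the start of the segment of $\pi_K$ holding the occurrences of $c$, encoded as a unary bitvector of length $O(\sigma)$ per block;
\item the inverse permutation $\pi_K^{-1}$.
\end{itemize}
Each of these fits in $O(\sigma)$ words per block, hence $O(n')$ words in total.

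\emph{Select.} To compute $\select_c(A,j)$, we use $B_c$ to find the block and the local rank $j'$. We then locate the start of $c$'s segment in $\pi_K$ and read the entry at offset $j'$. Every step is a constant-time access, so select takes $O(1)$ time.

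\emph{Rank.} To compute $\rank_c(A,j)$, we obtain the count of $c$ in earlier blocks from $B_c$. Within the block we need the number of occurrences of $c$ at local positions at most $j$. The positions of $c$ in the block form an increasing run inside $\pi_K$, so this count is a predecessor query on an increasing list from the universe $[0,\sigma)$. Lemma~\ref{lem-pred} answers it in $O(\log\log_{\word}\sigma)$ time with $O(\log\sigma)$ bits per element. To avoid a separate predecessor structure for every $(K,c)$ pair, we store one structure per block over the keys $(c,\text{position})$, using combined integers in the universe $[0,\sigma^2)$. Since $\log\log_{\word}\sigma^2=O(\log\log_{\word}\sigma)$, the time bound is unchanged. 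We first check whether $A[j]=c$ using $\pi_K^{-1}$. If it does, the local rank is read directly as the offset of $\pi_K^{-1}(j)$ within $c$'s segment. Otherwise we perform the predecessor query for the key $(c,j)$.

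\emph{Main obstacle.} The main difficulty is keeping the predecessor structures within $O(n')$ words while respecting the per-block universe. The point to verify is that Lemma~\ref{lem-pred} is applied per block with universe size polynomial in $\sigma$, not $n'$. That is exactly why the blocking into chunks of length $\sigma$ is needed: it reduces the query time from $O(\log\log_{\word} n')$ to $O(\log\log_{\word}\sigma)$.
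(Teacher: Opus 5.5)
Your proposal is correct. The paper gives no proof of this lemma beyond the citation to Belazzougui and Navarro, and what you describe is the uncompressed ($O(n')$-word) version of the chunking construction that reference builds on: blocks of length $\sigma$, per-symbol unary count bitvectors, a stably sorted permutation and its inverse per block for $O(1)$ select, and an in-block predecessor search over a universe polynomial in $\sigma$ for rank. The points you leave implicit are harmless: the in-block count is the predecessor's position in the block's sorted key list minus the start of $c$'s segment, or $0$ if the predecessor carries a smaller symbol; and $\log\log_{\word}\sigma^2 = 1+\log\log_{\word}\sigma$ is absorbed under the usual reading of the bound as $O(1+\log\log_{\word}\sigma)$.
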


\smallskip
\noindent
\textbf{Positional Burrows--Wheeler Transform (PBWT).}
Closely related to the PBWT is the \emph{Prefix Array (PA)}, a matrix that records, for each column, the permutation of haplotype indices induced by the PBWT.
Formally,   the Prefix Array $\PA$ built for the matrix $M$ is an $\height\times \width$ matrix, in which $\col_1(\PA)$ is simply the list $1, 2, \dots, \height$, and $\col_j(\PA)$, for $j>1$, stores the permutation of the set $\{1, \dots, \height\}$ induced by the co-lexicographic ordering of prefixes of $\{S_1, \dots, S_\height\}$ up to column $j-1$, that is, $\col_j(\PA)[i] = k$ if and only if $S_k[1..j-1]$ is ranked $i$ in the co-lexicographic order of $S_1[1..j-1], \dots, S_{\height}[1..j-1]$. 

Let $\PBWT$ be the matrix representing the positional BWT of $M$. Then $\PBWT$ is also an $\height \times \width$ matrix in which $\col_j(\PBWT)[i] = \col_j(M)[\col_j(\PA)[i]]$ for all $i \in [1..\height]$ and $j \in [1..\width]$.
We refer to a maximal substring of identical characters in $\col_j(\PBWT)$ as a \emph{run}.
Throughout the paper, let $r_j$ denote the number of runs for $\col_{j}(\PBWT)$. We define $\newR$ as $\sum_{1\le j\le \width} r_j$.

% Let us recall that given a column $j$ and row $i$ of the PBWT,   the operations $\Fore[i][j]$ (respectively, $\Back[i][j]$)  allows to recover the location in column $j+1$  (respectively, in column $j-1$) of the row index  in entry $[i][j]$ of the prefix-array. 

\smallskip
\noindent
\textbf{Forward and backward stepping on PBWT.}
We define $\Fore[i][j]$, for $i\in [1..\height]$ and $j\in[1..\width)$, as the (row) index of $\col_j(\PA)[i]$ in $\col_{j+1}(\PA)$, and $\Back[i][j]$, for $i\in [1..\height]$ and $j\in(1..\width]$, as the (row) index of $\col_j(\PA)[i]$ in $\col_{j-1}(\PA)$.

Previously, the operation $\Fore[i][j]$ could be implemented as follows. Let $C_c$ denote the number of occurrences of symbols $c' < c$ in column $j$ of the $\PBWT$, that is, in $\col_j(\PBWT)$. Then, $\Fore[i][j] = C_c + \rank_c(\col_j(\PBWT), i)$, where $c = \col_j(\PBWT)[i]$. The backward stepping operation can be implemented in a symmetric manner.

% Previously, the operation $\Fore[i][j]$ can be implemented as follows:
% Let $C_c$ be the number of occurrences of symbols $c'<c$ in the column $j$ of $\PBWT$, that is, $\col_j(\PBWT)$.
% Then, $\Fore[i][j]=C_c + \rank_c(\col_j(\PBWT), i)$, where $c=\col_j(\PBWT)[i]$.
% The backward stepping operation can be implemented symmetrically.

%Proposition \ref{prop-lf-map} states the properties relevant to $\Fore$ , whose proof is deferred to Appendix \ref{app-prop-lf-map}.
Proposition \ref{prop-lf-map} states the key properties of $\Fore$, with the proof deferred to Appendix \ref{app-prop-lf-map}.
 
\begin{proposition}\label{prop-lf-map}
    (a) If $\col_j(\PBWT)[i']=\col_j(\PBWT)[i'']$ and $i'<i''$, then $\Fore[i'][j]<\Fore[i''][j]$, and (b) if $\col_j(\PBWT)[i']=\col_j(\PBWT)[i'+1]$, then $\Fore[i'][j]+1=\Fore[i'+1][j]$.
\end{proposition}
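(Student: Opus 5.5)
The plan is to use the formula $\Fore[i][j]=C_c+\rank_c(\col_j(\PBWT),i)$ stated just above, where $c=\col_j(\PBWT)[i]$. Since that formula is only asserted in the text, I would first justify it from the definitions of $\PA$ and the co-lexicographic order, and then derive (a) and (b) as direct consequences.

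\textbf{Justifying the formula.} Fix column $j$ and write $k=\col_j(\PA)[i]$, so $c=S_k[j]$. The position of $k$ in $\col_{j+1}(\PA)$ is its rank in the co-lexicographic order of the prefixes $S_1[1..j],\dots,S_\height[1..j]$. Two prefixes are compared co-lexicographically by their last symbol first, and ties are broken by comparing $S[1..j-1]$ co-lexicographically. Hence all haplotypes whose $j$-th symbol is $c'<c$ precede $k$; there are exactly $C_c$ of them. Among the haplotypes with $j$-th symbol equal to $c$, the relative order is the order in $\col_j(\PA)$. The number of such haplotypes placed at or before $k$ is therefore $\rank_c(\col_j(\PBWT),i)$.

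One subtlety is that haplotypes need not be distinct, so prefixes may tie. The definition of $\PA$ implicitly fixes a tie-breaking rule, and I would make it explicit: ties are broken by the previous column's order, with $\col_1(\PA)=1,\dots,\height$ as the base. This makes the order on prefixes of length $j$ exactly the pair (last symbol, position in $\col_j(\PA)$). Fixing this convention is the only real obstacle; once it is in place, the formula follows by induction on $j$.

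\textbf{Part (a).} Suppose $\col_j(\PBWT)[i']=\col_j(\PBWT)[i'']=c$ and $i'<i''$. Both $\Fore$ values share the same $C_c$. Since position $i''$ itself holds $c$ and lies in $(i',i'']$, we have $\rank_c(\col_j(\PBWT),i'')\ge \rank_c(\col_j(\PBWT),i')+1$, so $\Fore[i'][j]<\Fore[i''][j]$.

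\textbf{Part (b).} With $i''=i'+1$, the interval $(i',i'+1]$ contains exactly one occurrence of $c$. So the two ranks differ by exactly one, and $\Fore[i'+1][j]=\Fore[i'][j]+1$.
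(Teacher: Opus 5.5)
Your proof is correct and uses the same idea as the paper, which just observes that $\col_{j+1}(\PA)$ is obtained from $\col_j(\PA)$ by stably sorting on the $j$-th symbol and reads off (a) and (b) from that. Your tie-breaking convention (ties resolved by the order in the previous column) is exactly the stability the paper invokes, and your derivation of the $C_c+\rank_c$ formula makes that one-line argument explicit, including the case of non-distinct haplotypes that the paper's definition of $\PA$ leaves implicit.
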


% \begin{proof}
% When constructing the $j$-th column of $\PBWT$, note that the prefixes of the haplotypes up to column $j-1$ are stably sorted.
% Therefore, statements (a) and (b) follow directly from this stable ordering.
% \end{proof}

\section{A Lower Bound and an Upper Bound on $\newR$}
\label{sect-bounds-newR}

In this section, we establish lower and upper bounds on $\newR$ in terms of $\height''$, where $\height''$ denotes the number of pairs $(S_i, S_{i+1})$ such that $1 \le i < \height$ and $S_i \ne S_{i+1}$.
%All proofs omitted in the section can be found in Appendix \ref{app-bounds-newR}.

\begin{lemma}\label{lem-height-r-2}
It holds that $\newR \ge \height'' + 1$.
\end{lemma}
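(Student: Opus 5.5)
The plan is to count \emph{run boundaries}, meaning positions $p$ with $\col_j(\PBWT)[p]\neq\col_j(\PBWT)[p+1]$. Column $j$ has exactly $r_j-1$ of them, so $\newR=\sum_{j}(r_j-1)+\width$. It therefore suffices to show $\sum_j (r_j-1)\ge \height''$, because then $\newR\ge \height''+\width\ge \height''+1$. To get this, I would construct an injective map from the set of pairs $\{(S_i,S_{i+1}) : S_i\neq S_{i+1}\}$ into the set of run boundaries over all columns.

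\textbf{Adjacency is preserved inside runs.} Suppose haplotypes $k,k'$ occupy consecutive rows $p,p+1$ of $\col_j(\PA)$ and $\col_j(\PBWT)[p]=\col_j(\PBWT)[p+1]$. Then Proposition~\ref{prop-lf-map}(b) gives $\Fore[p+1][j]=\Fore[p][j]+1$. So $k,k'$ occupy consecutive rows of $\col_{j+1}(\PA)$, and in the same relative order.

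\textbf{Tracking one pair.} Since $\col_1(\PA)$ is the identity, haplotypes $i$ and $i+1$ start in rows $i,i+1$ of column $1$. Fix $i$ with $S_i\neq S_{i+1}$. Because both haplotypes have length $\width$, there is a first column $t=t(i)$ with $S_i[t]\neq S_{i+1}[t]$. For every $j<t$ the two haplotypes carry the same symbol in column $j$. By induction on $j$ using the previous step, they are adjacent in $\col_j(\PA)$ for all $j\le t$, say at rows $p,p+1$ of $\col_t(\PA)$. At column $t$ their symbols differ, so $p$ is a run boundary of $\col_t(\PBWT)$. Map $i\mapsto (t(i),p)$.

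\textbf{Injectivity.} The pair $(t,p)$ determines the unordered set of haplotype indices $\{\col_t(\PA)[p],\col_t(\PA)[p+1]\}$, and this set equals $\{i,i+1\}$. Distinct $i$ give distinct sets, so the map is injective. Hence the total number of run boundaries $\sum_j(r_j-1)$ is at least $\height''$, and the bound follows.

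The only step needing care is the induction in the tracking step. One must check that adjacency persists in every column before $t$, and not merely that the two rows share a run in column $1$. This is exactly where the hypothesis that they agree on all of $S_i[1..t-1]$ is used.
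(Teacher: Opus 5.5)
Your proof is correct and follows the paper's argument. Each pair $(S_i,S_{i+1})$ with $S_i\neq S_{i+1}$ stays adjacent in the prefix arrays, via Proposition~\ref{prop-lf-map}(b), up to its first mismatching column, and you map it injectively to the run boundary created there. You are also slightly more careful than the paper in noting that there are exactly $\newR-\width$ run boundaries in total, which gives the marginally stronger bound $\newR\ge\height''+\width$.
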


\begin{proof}
Consider any pair $(S_i, S_{i+1})$ such that $S_i \ne S_{i+1}$. 
Let $j''$ denote the length of the longest common prefix between $S_i$ and $S_{i+1}$.
Observe that the indices $i$ and $i+1$ remain consecutive in $\col_j(\PA)$ for all $1 \le j \le j'' + 1$.
Let $\tau$ be the row index such that $\col_{j''+1}(\PA)[\tau] = i$.
By the observation above, we have $\col_{j''+1}(\PA)[\tau + 1] = i + 1$.
Since $\col_{j''+1}(\PBWT)[\tau] \ne \col_{j''+1}(\PBWT)[\tau + 1]$ by the definition of $j''$, there must be a run boundary between rows $\tau$ and $\tau + 1$ in the $(j'' + 1)$-st column of $\PBWT$.
Therefore, each pair $(S_i, S_{i+1})$ with $S_i \ne S_{i+1}$ corresponds to a run boundary in $\PBWT$.
The mapping from such a pair to a run boundary at $(\tau, j''+1)$ is injective, since each haplotype index $i$ appears at a unique row $\tau$ in $\PA$ of any column.
This establishes that $\newR \ge \height'' + 1$.
\end{proof}

\begin{corollary}\label{prop-height-r}
It follows that $\newR$ is at least the number of distinct haplotypes in $\{S_1, S_2, \dots, S_{\height}\}$.
\end{corollary}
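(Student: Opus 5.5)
The plan is to derive the corollary directly from Lemma~\ref{lem-height-r-2} by comparing $\height''+1$ with the number $d$ of distinct haplotypes among $\{S_1,\dots,S_\height\}$. It suffices to show that $\height''+1 \ge d$, since then $\newR \ge \height''+1 \ge d$.

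\textbf{Counting maximal blocks.} Consider the sequence $S_1, S_2, \dots, S_\height$ and partition it into maximal blocks of consecutive equal haplotypes, that is, maximal index intervals $[a,b]$ with $S_a = S_{a+1} = \dots = S_b$. Two consecutive blocks are separated exactly at an index $i$ with $S_i \ne S_{i+1}$. Hence the number of blocks is exactly $\height''+1$.

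\textbf{Comparing with $d$.} Every distinct haplotype value occurs in at least one block, and each block carries a single value. The map from blocks to values is therefore surjective onto the set of distinct haplotypes, which gives $d \le \height''+1$. Equality holds when identical haplotypes are grouped contiguously; in general a value may recur in several non-adjacent blocks, which only makes $\height''$ larger and is harmless for the inequality.

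\textbf{Main obstacle.} There is no genuine obstacle here. The only point to state carefully is that $\height''$ counts adjacent unequal pairs rather than distinct values, so one must argue through blocks (surjectivity) and not claim equality. Combining the inequality with Lemma~\ref{lem-height-r-2} finishes the proof.
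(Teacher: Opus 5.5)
Your proposal is correct and follows the same route as the paper: combine Lemma~\ref{lem-height-r-2} ($\newR \ge \height''+1$) with the inequality that the number of distinct haplotypes is at most $\height''+1$. The paper only asserts that inequality, and your count of maximal blocks of consecutive equal haplotypes, with blocks mapping surjectively onto the distinct values, supplies the justification it leaves implicit.
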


\begin{proof}
    The statement holds because the number of distinct haplotypes in $\{S_1, S_2, \dots, S_{\height}\}$ is at most $\height'' + 1$.
\end{proof}

We call an interval $[b, e]$ \emph{canonical} with respect to column $j$ of the PBWT, for $1 \le j \le \width$,  if $[b, e]$ is the maximal interval such that $S_{\col_j(\PA)[b]} = S_{\col_j(\PA)[b+1]} = \cdots = S_{\col_j(\PA)[e]}$.
Let $\ell_j$ denote the number of canonical intervals with respect to column $j$ for $1 \le j \le \width$.
Lemma~\ref{lemma-r-ell} describes the relationship between $\ell_j$ and $r_j$.

\begin{lemma}\label{lemma-r-ell}
It holds that $r_j \le \ell_j$.
\end{lemma}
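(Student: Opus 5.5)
We will prove $r_j \le \ell_j$ by showing that every run boundary in $\col_j(\PBWT)$ is also a boundary between two consecutive canonical intervals with respect to column $j$. Since the canonical intervals partition $[1,\height]$ into $\ell_j$ blocks, there are exactly $\ell_j - 1$ canonical boundaries. The runs of $\col_j(\PBWT)$ partition $[1,\height]$ into $r_j$ blocks with $r_j - 1$ boundaries. An injection from run boundaries into canonical boundaries therefore gives $r_j - 1 \le \ell_j - 1$, which is the claim.

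First, the canonical intervals do partition the rows. The relation ``$S_{\col_j(\PA)[i]} = S_{\col_j(\PA)[i+1]}$'' between consecutive rows $i$ and $i+1$ defines maximal blocks. Consequently, a position $i \in [1,\height)$ is the right end of a canonical interval exactly when $S_{\col_j(\PA)[i]} \ne S_{\col_j(\PA)[i+1]}$.

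Now take a run boundary, i.e.\ a position $i$ with $\col_j(\PBWT)[i] \ne \col_j(\PBWT)[i+1]$. By the definition of $\PBWT$, this says $\col_j(M)[\col_j(\PA)[i]] \ne \col_j(M)[\col_j(\PA)[i+1]]$. In other words, the haplotypes $S_{\col_j(\PA)[i]}$ and $S_{\col_j(\PA)[i+1]}$ differ at position $j$, so they are distinct strings. Hence $i$ is also a canonical boundary. The map sending each run boundary to itself is therefore an injection into the set of canonical boundaries, and the inequality follows.

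There is no real obstacle here. The only care needed is to read ``canonical'' correctly: it concerns equality of entire haplotypes placed at adjacent rows in the order $\col_j(\PA)$, not merely equality of their prefixes up to column $j-1$. We also need that the canonical intervals really are disjoint maximal blocks partitioning $[1,\height]$, so that they number $\ell_j$ with $\ell_j - 1$ boundaries. Both points are immediate from the definitions, and no earlier lemma is required.
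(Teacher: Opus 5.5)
Your proof is correct and is essentially the paper's argument: both rest on the fact that rows with identical haplotypes carry identical symbols in column $j$, so the partition into canonical intervals refines the partition into runs. The paper phrases this as each canonical interval lying inside a (possibly non-maximal) block of equal symbols. You phrase it contrapositively, as every run boundary being a canonical boundary, and then count boundaries, which makes the final counting step slightly more explicit.
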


\begin{proof}
    Consider column $j$ for any $1 \le j \le \width$.
    Let $[b, e]$ denote any canonical interval with respect to this column.
    Observe that $\col_j(\PBWT)[b] = \col_j(\PBWT)[b+1] = \cdots = \col_j(\PBWT)[e]$.
    Hence, $[b, e]$ corresponds to a contiguous block of identical symbols in $\col_j(\PBWT)$, although the block might not be maximal.
    This implies that $r_j \le \ell_j$.
\end{proof}

%We next establish an upper bound on $\ell_j$.

\begin{lemma}\label{lem-bound-ell_j}
It holds that $\ell_j \le \height'' + 1$.
\end{lemma}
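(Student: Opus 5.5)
We need to show that the number of canonical intervals in column $j$ is at most $\height''+1$. The plan is to show that the boundaries between consecutive canonical intervals in $\col_j(\PA)$ inject into the ``break'' pairs $(S_i,S_{i+1})$ with $S_i\ne S_{i+1}$. Since there are $\ell_j-1$ boundaries and only $\height''$ break pairs, this gives $\ell_j - 1 \le \height''$.

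\textbf{Step 1 (order within a column).} For a fixed column $j$, the co-lexicographic sort of the prefixes $S_k[1..j-1]$ is stable: ties are broken by the original row index. This is the same stability used in Proposition~\ref{prop-lf-map}, and for $j=1$ it holds trivially. Consequently, if $\col_j(\PA)[\tau]=k$ and $\col_j(\PA)[\tau+1]=k'$, then either $S_k[1..j-1]\prec_{colex} S_{k'}[1..j-1]$, or the two prefixes are equal and $k<k'$.

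\textbf{Step 2 (counting blocks as a partition).} Partition the haplotype indices $\{1,\dots,\height\}$ into maximal blocks of consecutive indices $[a,b]$ with $S_a=S_{a+1}=\dots=S_b$. There are exactly $\height''+1$ such blocks. It suffices to show that the indices of each block appear contiguously in $\col_j(\PA)$. Then every canonical interval is a union of whole blocks, because a canonical interval is a maximal run of identical haplotypes, and identical haplotypes in one block stay together. Hence $\ell_j\le \height''+1$.

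\textbf{Step 3 (contiguity, the main step).} Take a block $[a,b]$ and suppose some index $k\notin[a,b]$ lies strictly between two members of the block in $\col_j(\PA)$. All members of the block share the same prefix $\beta=S_a[1..j-1]$. By the colex order, $S_k[1..j-1]=\beta$ as well, so all these entries are ordered by original index. Then $k$ would have to lie between two indices of $[a,b]$ numerically, i.e.\ $k\in[a,b]$, a contradiction.

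I expect the main obstacle to be stating the stability of the sort cleanly. The definition of $\PA$ in the preliminaries only says ``ranked $i$ in the co-lexicographic order'' without specifying tie-breaking. I would justify it from the standard PBWT construction, in which column $j+1$ is obtained from column $j$ by a stable sort on $\col_j(M)$. By induction on $j$, this gives tie-breaking by original index, which is exactly what the proof of Lemma~\ref{lem-height-r-2} already implicitly assumed when it claimed $i$ and $i+1$ remain adjacent.
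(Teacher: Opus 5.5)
Your proof is correct, but it follows a different route from the paper's. The paper inducts on $j$. From $\col_1(\PA)=1,\dots,\height$ it gets $\ell_1=\height''+1$. In the inductive step, each canonical interval of column $j-1$ lies in a single run, so, by Proposition~\ref{prop-lf-map}(b), $\Fore$ maps it onto a contiguous range of identical haplotypes in column $j$, hence into one canonical interval. Since these images partition $[1,\height]$, this gives $\ell_j\le\ell_{j-1}$.

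You instead work in a fixed column and use that ties in the co-lexicographic order are broken by original index. Hence each of the $\height''+1$ maximal input blocks stays contiguous in $\col_j(\PA)$, and every canonical interval is a union of such blocks.

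What each route buys:
\begin{itemize}
\item Your version gives an explicit structural description of the canonical intervals in every column. It also makes the tie-breaking convention visible; this is exactly what the proof of Lemma~\ref{lem-height-r-2} relies on when it asserts that $i$ and $i+1$ stay adjacent.
\item The induction does not disappear in your version. It moves into your justification of index tie-breaking from the stable column-by-column sort.
\item The paper's version stays inside the $\Fore$ machinery used throughout. It also yields the monotonicity $\ell_1\ge\ell_2\ge\dots\ge\ell_{\width}$, i.e., canonical intervals can only merge as $j$ grows.
\end{itemize}
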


\begin{proof}
    The proof proceeds by induction on $j$.
    Let $H$ be the array consisting of $\{\height\} \cup \{\, i \mid S_i \ne S_{i+1} \text{ and } 1 \le i < \height \,\}$ in increasing order.
    It follows that $|H| = 1 + \height''$.

    For the base case $j = 1$, observe that each interval $[H[t-1] + 1, H[t]]$ for $1 \le t \le |H|$ (under the convention $H[0] = 0$) forms a distinct canonical interval with respect to the first column.
    Since $\bigcup_t [H[t-1] + 1, H[t]] = [1, \height]$, we have $\ell_1 = |H|$.

    Assume inductively that $\ell_{j-1} \le \height'' + 1$.
    Let $[b_{j-1}, e_{j-1}]$ denote any canonical interval with respect to column $j-1$, so we have $\col_{j-1}(\PBWT)[b_{j-1}] = \cdots = \col_{j-1}(\PBWT)[e_{j-1}]$.
    By Proposition~\ref{prop-lf-map}(b), the integers $\Fore[b_{j-1}][j-1], \Fore[b_{j-1}+1][j-1], \dots, \Fore[e_{j-1}][j-1]$ are consecutive, forming an interval $[\Fore[b_{j-1}][j-1], \Fore[e_{j-1}][j-1]]$.
    Furthermore, by the definitions of $\Fore$ and $[b_{j-1}, e_{j-1}]$, the haplotypes $S_{\col_j(\PA)[t]}$ for all $t \in [\Fore[b_{j-1}][j-1], \Fore[e_{j-1}][j-1]]$ are identical.
    Thus, this interval is contained within some canonical interval with respect to column $j$.
    This containment may be strict whenever multiple canonical intervals with respect to column $j-1$, corresponding to identical haplotype sequences, are mapped to adjacent positions in column $j$, thereby merging into a single canonical interval with respect to column $j$.
    Therefore, each canonical interval with respect to column $j-1$ is mapped by $\Fore$ to a subinterval of a canonical interval with respect to column $j$.
    Since the union of these mapped intervals covers all rows $[1, \height]$, the number of canonical intervals in column $j$ must be less than or equal to the number in column $j-1$ ($\ell_j \le \ell_{j-1}$).
    By the induction hypothesis, $\ell_j \le \ell_{j-1} \le \height'' + 1$, completing the proof.
\end{proof}

%Finally, we conclude this section by establishing an upper bound on $\newR$.

\begin{theorem}\label{theorem-upper-newR}
It holds that $\newR \le \width \cdot (\height'' + 1)$.
\end{theorem}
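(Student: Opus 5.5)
The plan is to bound each column separately and then sum. By definition, $\newR = \sum_{1\le j\le \width} r_j$, so it suffices to show that $r_j \le \height''+1$ for every column $j \in [1,\width]$. Summing this per-column bound over the $\width$ columns immediately yields $\newR \le \width\cdot(\height''+1)$.

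For a fixed column $j$, I will chain the two preceding lemmas. Lemma~\ref{lemma-r-ell} gives $r_j \le \ell_j$, because every canonical interval with respect to column $j$ is a block of identical symbols in $\col_j(\PBWT)$. Lemma~\ref{lem-bound-ell_j} gives $\ell_j \le \height''+1$, by induction on $j$ using the forward stepping property of Proposition~\ref{prop-lf-map}(b). Combining the two gives $r_j \le \ell_j \le \height''+1$ for every $j$.

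Finally I will write
\[
\newR = \sum_{j=1}^{\width} r_j \le \sum_{j=1}^{\width} \ell_j \le \width\cdot(\height''+1).
\]
There is no substantive obstacle, since all the work is done in Lemma~\ref{lem-bound-ell_j}. The only point to check is that both lemmas hold uniformly for every $j\in[1,\width]$, including $j=1$. For $j=1$ this is covered by the base case of the induction, so the bound applies to all columns.
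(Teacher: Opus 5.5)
Your proposal is correct and follows the paper's proof exactly. You combine Lemma~\ref{lemma-r-ell} and Lemma~\ref{lem-bound-ell_j} to get $r_j \le \height''+1$ for every column $j$, and then sum over the $\width$ columns.
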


\begin{proof}
By Lemmas~\ref{lemma-r-ell} and~\ref{lem-bound-ell_j}, we have $r_j \le \height'' + 1$ for all $j$.
Therefore, $\newR = \sum_{1 \le j \le \width} r_j 
    \le \sum_{1 \le j \le \width} (\height'' + 1)
    = \width (\height'' + 1).$
\end{proof}

%\section{The Three-Overlap Constraint and a Three-Overlap Normalization Algorithm}
\section{The Three-Overlap Constraint and a Normalization Algorithm}
\label{sect-three-overlap-normal}

In this section, we define the three-overlap constraint and present a new algorithm for partitioning runs based on it.

% \begin{definition}[The Three-Overlap Constraint]
% Let $I_p=[p_1, p'_1], [p_2, p'_2], \dots [p_x, p'_x]$ be $x$ pairwise-disjoint intervals into which $[1, n]$ is partitioned, where $p_1=1, p'_x=n$, and $p'_i+1=p_{i+1}$ for $1\le i < x$.
% Let $I_q=[q_1, q'_1], [q_2, q'_2], \dots [q_y, q'_y]$ be $y$ pairwise-disjoint intervals into which $[1, n]$ is partitioned, where $q_1=1, q'_y=n$, and $q'_j+1=q_{j+1}$ for $1\le j < y$.
% $I_p$ satisfies the three-overlap constraint with respect to $I_q$, each interval in $I_p$ overlaps at most three intervals in $I_q$.
% \end{definition}

\begin{definition}[The Three-Overlap Constraint]
Let $I_p = \{ [p_1, p'_1], [p_2, p'_2], \dots, [p_x, p'_x] \}$
be a collection of $x$ pairwise disjoint intervals that partition the range $[1, n]$, 
where $p_1 = 1$, $p'_x = n$, and $p_{i+1}=p'_i + 1 $ for $1 \le i < x$.
Similarly, let $I_q = \{ [q_1, q'_1], [q_2, q'_2], \dots, [q_y, q'_y] \}$ be a collection of $y$ pairwise disjoint intervals that also partition $[1, n]$, 
where $q_1 = 1$, $q'_y = n$, and $q_{j+1}=q'_j + 1 $ for $1 \le j < y$.
We say that $I_p$ \emph{satisfies the three-overlap constraint with respect to} $I_q$
if every interval in $I_p$ overlaps with at most three intervals in $I_q$.
\end{definition}

Note that the three-overlap constraint differs from the \emph{balancing} property introduced in the move data structure \cite{NishimotoT21} in two key aspects. 
Under the balancing property, 
\begin{itemize}
    \item $I_p$ and $I_q$ are in bijective correspondence---that is, there exists a bijection $f(\cdot)$ (with inverse $f^{-1}(\cdot)$) such that for each interval $\mathsf{I} \in I_p$ (resp., $\mathsf{I} \in I_q$), the interval $[f(\mathsf{I}.b), f(\mathsf{I}.e)]$ belongs to $I_q$ (resp., $[f^{-1}(\mathsf{I}.b), f^{-1}(\mathsf{I}.e)]\in I_p$), and
    \item each interval in $I_p$ contains at most three left endpoints of intervals in $I_q$; consequently, a single interval in $I_p$ may overlap with up to four distinct intervals from $I_q$.
\end{itemize}

% . 
% Second, each interval in $I_p$ contains at most three left endpoints of intervals in $I_q$; consequently, a single interval in $I_p$ may overlap with up to four distinct intervals from $I_q$.

% \begin{figure}%[!h]
% 		\centering
% 		%\captionsetup{justification=centering}
% 		\includegraphics[scale=1]{./figs/normal-alg}
% 		\caption{
% Example of the normalization algorithm. 
% Given $I_p = \{[1,1], [2,11], [12,16]\}$ and 
% $I_q = \{[1,2], [3,3], [4,5], [6,7], [8,9], [10,10], [11,13], [14,14], [15,16]\}$, 
% the algorithm outputs $\hat{I}_p = \{[1,1], [2,5], [6,10], [11,11],  [12,16]\}$, 
% where each interval in $\hat{I}_p$ overlaps at most three intervals in $I_q$.
% }
%     \label{fig:normal-alg}
% \end{figure}

\begin{lemma}\label{lem-interval-overlap-three}
Let $I_p$ and $I_q$ denote two lists of intervals into which $[1,n]$ is partitioned.
There exists an $O(|I_p|+|I_q|)$-time algorithm that partitions all intervals in $I_p$ into at most $(|I_p|+\Big\lfloor \frac{|I_q|}{2}\Big\rfloor)$ sub-intervals, satisfying the three-overlap constraint, with respect to $I_q$.
%Let $I_p=[p_1, p'_1], [p_2, p'_2], \dots [p_x, p'_x]$ be $x$ pairwise-disjoint intervals into which $[1, n]$ is partitioned, where $p_1=1, p'_x=n$, and $p'_i+1=p_{i+1}$ for $1\le i < x$.
%Let $I_q=[q_1, q'_1], [q_2, q'_2], \dots [q_y, q'_y]$ be $y$ pairwise-disjoint intervals into which $[1, n]$ is partitioned, where $q_1=1, q'_y=n$, and $q'_j+1=q_{j+1}$ for $1\le j < y$, where $y\ge 4$.
%One can partition all intervals in $I_p$ into at most $(x+\lfloor \frac{y}{2}\rfloor)$ sub-intervals such that every sub-interval overlaps with at most three intervals in $I_q$.
\end{lemma}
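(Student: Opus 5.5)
We use a greedy left-to-right sweep over each interval of $I_p$, cutting it whenever it would otherwise overlap a fourth interval of $I_q$. Afterwards we charge each extra cut to two intervals of $I_q$ that are disjoint from all other charges.

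\textbf{The sweep.} Fix an interval $\mathsf{I}=[p,p']\in I_p$. Let $[q_a,q'_a],\dots,[q_c,q'_c]$ be the intervals of $I_q$ that overlap $\mathsf{I}$, listed in order. Let $k=c-a+1$ be their number. If $k\le 3$, we keep $\mathsf{I}$ as it is. Otherwise we cut $\mathsf{I}$ just before the left endpoints $q_{a+2}, q_{a+4}, q_{a+6},\dots$ that lie inside $\mathsf{I}$. Concretely, the first piece is $[p, q_{a+2}-1]$, which overlaps exactly $[q_a,q'_a]$ and $[q_{a+1},q'_{a+1}]$. Each middle piece is $[q_{a+2t}, q_{a+2t+2}-1]$, which overlaps exactly two intervals of $I_q$. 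The last piece begins at some $q_{a+2t}$ and ends at $p'$, so it overlaps either two or three intervals of $I_q$; we stop cutting as soon as the remaining tail overlaps at most three. Every sub-interval therefore overlaps at most three intervals of $I_q$, so the three-overlap constraint holds.

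\textbf{Running time.} We merge the two sorted endpoint lists in a single sweep. Each endpoint of $I_p$ and of $I_q$ is visited a constant number of times, so the total time is $O(|I_p|+|I_q|)$.

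\textbf{Counting.} The number of pieces of $\mathsf{I}$ is $1$ plus the number of cuts. Each cut at $q_{a+2t}$ is charged to the two intervals $[q_{a+2t-2},q'_{a+2t-2}]$ and $[q_{a+2t-1},q'_{a+2t-1}]$, which are the two intervals of $I_q$ spanned by the piece just before that cut.

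The key claim is that these charged intervals are \emph{entirely contained} in $\mathsf{I}$ and are never charged twice. They are never charged twice within $\mathsf{I}$ because the charged pairs are consecutive and disjoint. The only possible exception to containment is the first pair, because $[q_a,q'_a]$ may start before $p$ and so be shared with the preceding interval of $I_p$. Only $[q_a,q'_a]$ and $[q_c,q'_c]$ can be shared with neighbouring intervals of $I_p$, so this is the only case to handle.

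\textbf{Main obstacle.} Fixing the charging scheme is the delicate step. The cleanest fix is to charge each cut to the two intervals of $I_q$ immediately \emph{after} it, namely those with left endpoints $q_{a+2t}$ and $q_{a+2t+1}$. Both of these have their left endpoint inside $\mathsf{I}$.

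We need $q_{a+2t+1}\le p'$ for every cut. This holds because we only cut when at least four intervals remain, so at least two overlapping intervals begin at or after the cut. Hence $q_{a+2t}$ and $q_{a+2t+1}$ are left endpoints inside $\mathsf{I}$ and strictly after $p$.

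Since every interval of $I_q$ has a unique left endpoint, and that endpoint lies in exactly one interval of $I_p$, no interval of $I_q$ is charged twice, either within $\mathsf{I}$ or across different intervals of $I_p$. Therefore the total number of cuts is at most $\lfloor |I_q|/2\rfloor$. Adding the $|I_p|$ original pieces gives at most $|I_p|+\lfloor |I_q|/2\rfloor$ sub-intervals, as claimed.
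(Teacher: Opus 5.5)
Your proof is correct, but it differs from the paper's in both the cutting rule and the counting.

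The paper cuts greedily so that each cut piece overlaps exactly three intervals of $I_q$: the cut point $d_k$ is the right endpoint of the third overlapped $I_q$ interval. It then shows that the sets of $I_q$ intervals overlapped by distinct cut pieces are pairwise disjoint. Because every cut point is a right endpoint of an $I_q$ interval, no $I_q$ interval can straddle it, even across consecutive intervals of $I_p$. This bounds the number of cuts by $\lfloor |I_q|/3\rfloor$. The stated bound then follows with slack from $|I_p|+1+\lfloor |I_q|/3\rfloor\le |I_p|+\lfloor |I_q|/2\rfloor$ for $|I_q|\ge 4$, the case $|I_q|\le 3$ being trivial.

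You instead cut into pieces spanning only two $I_q$ intervals, with the stopping rule keeping the tail at two or three. This produces at least as many pieces and often more: for an interval of $I_p$ meeting $k\ge 4$ intervals of $I_q$, you make $\lfloor k/2\rfloor-1$ cuts where the paper makes $\lceil k/3\rceil-1$. In exchange, you charge each cut to the two $I_q$ intervals whose left endpoints immediately follow it. Those left endpoints lie strictly inside the current interval of $I_p$, so the charging is globally injective in one line. The same left-endpoint observation does the job that the paper's ``cut points are right endpoints'' observation does, and you get exactly $\lfloor |I_q|/2\rfloor$ with no case split.

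So the paper's rule buys a sharper count than the lemma states. That extra strength is not exploited later, since Lemma~\ref{lem-size-sub-runs} only uses the $\lfloor |I_q|/2\rfloor$ bound. Your approach buys a simpler accounting matched exactly to the statement.

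Incidentally, your first ``before the cut'' charging would also have worked. An interval of $I_p$ that is cut never charges its last two overlapped $I_q$ intervals, because the final tail spans at least two. Hence an $I_q$ interval shared among several intervals of $I_p$ is charged at most once.
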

\begin{proof}
Let $x=|I_p|$ and $I_p=[p_1, p'_1], [p_2, p'_2], \dots, [p_x, p'_x]$ be $x$ pairwise-disjoint intervals into which $[1, n]$ is partitioned, where $p_1=1, p'_x=n$, and $p'_i+1=p_{i+1}$ for $1\le i < x$.
Let $y=|I_q|$.
Without loss of generality, assume that $y\ge 4$; otherwise, $I_p$ immediately satisfies the three-overlap constraint.
Let $I_q=[q_1, q'_1], [q_2, q'_2], \dots, [q_y, q'_y]$ be $y$ pairwise-disjoint intervals into which $[1, n]$ is partitioned, where $q_1=1, q'_y=n$, and $q'_j+1=q_{j+1}$ for $1\le j < y$.

%We first describe the algorithm.
The algorithm $\texttt{normalization}(I_p, I_q)$ is described as follows.
Create a variable $k$ and initiate $k$ to $1$.
Iterate and process each interval $[p_i, p'_i]$ in $I_p$ from left to right as follows:
If $[p_i, p'_i]$ overlaps with at most three intervals of $I_q$, then skip $[p_i, p'_i]$ and move on to the next interval;
otherwise, divide $[p_i, p'_i]$ into two sub-intervals, $[p_i, d_k]$ and $[d_k+1, p'_i]$, where $1\le d_k \le n$ is the largest integer such that $[p_i, d_k]$ overlaps three intervals of $I_q$.
Then, substitute $[p_i, p'_i]$ for $[p_i, d_k]$ and $[d_k+1, p'_i]$, increment $k$ by one, and move on to the next interval, that is, $[d_k+1, p'_i]$.

Let $\hat{I}_p$ denote the list of intervals outputted by the above algorithm.
Figure~\ref{fig:normal-alg} (in Appendix \ref{app:normal-alg}) illustrates an example of the algorithm.
Clearly, we have $|\hat{I}_p|=x+k$ and it follows that every interval in $\hat{I}_p$ overlaps at most three intervals in $I_q$.
It remains to prove that $k\le \big\lfloor \frac{y}{2}\big\rfloor$.
To this end, we define $\Lambda(\hat{I}_p[i])$ for any $1\le i\le x+k$ to be the set of intervals in $I_q$, overlapping $\hat{I}_p[i]$.
%The algorithm above guarantees $|\Lambda(\hat{I}_p[i])|\le 3$.

\begin{claim}\label{claim-lambda-intersect}
    For any $1\le t_1<t_2<k$, let $\mathsf{I}_1$ (resp. $\mathsf{I}_2$) denote the interval in $\hat{I}_p$, of which the right endpoint is $d_{t_1}$ (resp. $d_{t_2}$).
    We have $\Lambda(\mathsf{I}_1)\cap \Lambda(\mathsf{I}_2)=\emptyset$.
\end{claim}

To see the correctness of the claim, observe that we have $d_{\tau}\in \{q'_1, \dots, q'_y\}$ for any $1\le \tau<k$, and thus the integers $d_{\tau}$ and $d_{\tau}+1$ are always in different intervals in $I_q$.
Moreover, $d_{t_1}+1\notin \tilde{\mathsf{I}}$ for any $\tilde{\mathsf{I}}\in \Lambda(\mathsf{I}_1)$, in view of the algorithm.

Assume that there exists an interval $\mathsf{I}\in \Lambda(\mathsf{I}_1)\cap \Lambda(\mathsf{I}_2)$.
Let $z$ denote the left endpoint of the interval $\mathsf{I}_2$.
Then, we have $d_{t_1}\in \mathsf{I}$ and $z\in \mathsf{I}$.
Since $d_{t_1}<d_{t_1}+1\le z$, we have $d_{t_1}+1\in \mathsf{I}\in \Lambda(\mathsf{I}_1)$ as well, a contradiction.
Hence, the assumption is false. $\diamond$

In view of the algorithm, the interval in $\hat{I}_p$ ending at $d_{\tau}$ for any $1\le \tau<k$ overlaps exactly three intervals in $I_q$.
In view of Claim \ref{claim-lambda-intersect}, we have $k-1\le \big\lfloor \frac{y}{3} \big\rfloor$; therefore, $x+k\le x+1+\big\lfloor \frac{y}{3} \big\rfloor\le x+\big\lfloor \frac{y}{2}\big\rfloor$, for $y\ge 4$.   

Clearly, the algorithm runs in $O(|\hat{I}_p|+|I_p|+|I_q|)\subseteq O(|I_p|+|I_q|)$ time.
\end{proof}

\section{Constructing Sub-Runs and Bounding Their Number}
\label{sect-sub-run-back}

In this section, we introduce \emph{sub-runs} for forward and backward stepping (i.e., $\Fore$ and $\Back$) in the PBWT.

While runs are maximal-length substrings consisting of the same character, we define sub-runs as substrings of the same character without the maximal-length restriction.

We define a \emph{run interval} with respect to column $j$ ($1 \le j \le \width$) as the maximal interval $[b, e]$ such that all symbols in positions $b, b+1, \dots, e$ of $\col_j(\PBWT)$ are identical. 
Let $\inter_j$ denote the list of the $r_j$ run intervals in column $j$, sorted in increasing order of their starting positions.
Similarly, a \emph{sub-run interval} is any contiguous subrange contained within a run interval.
We abuse notation slightly and use the terms run intervals (resp. sub-run intervals) and runs (resp. sub-runs) interchangeably.

% In the following, we define two bijection functions $\BiFore$ and $\BiBack$ that will be used, respectively, in the construction of two lists $\subIB_j$ and  $\subIF_j$ of sub-intervals for $1\le j\le \width$, which are essentially sub-runs for column $j$:
In the following, we define two bijection functions $\BiFore$ and $\BiBack$ that will be used, respectively, in the construction of two lists $\subIB_j$ and  $\subIF_j$ of sub-runs for each column $j$.
The lists $\subIB_j$ and $\subIF_j$ are used later to implement $\Back$ queries and $\Fore$ queries, respectively. 
%The two bijctions are functions $\BiFore$ and $\BiBack$. 
More precisely,  $\BiFore_j(L)$ returns the list $\{ [\Fore[b_\tau][j], \Fore[e_\tau][j]] \mid 1 \le \tau \le |L| \}$, sorted in increasing order by the left endpoint of each interval, for any set of intervals $L = \{ [b_\tau, e_\tau] \mid 1 \le \tau \le |L|, 1 \le b_\tau \le e_\tau \le \height \}$ and any $1 \le j <\width$.

 By Proposition~\ref{prop-lf-map}(b), it follows that $\Fore[i+1][j] - \Fore[i][j] = 1$ for any $1 \le \tau \le r_j$ and $p_\tau \le i < p_{\tau+1}-1$. Hence, $\BiFore_j(L)$ is well-defined (i.e., $\Fore[e_\tau][j] \ge \Fore[b_\tau][j]$ for every $1 \le \tau \le |L|$) if each interval in $L$ is a sub-interval of some interval in $\inter_j$.  

Symmetrically, we define $\BiBack_j(L)$ to return the list $\{ [\Back[b_\tau][j], \Back[e_\tau][j]] \mid 1 \le \tau \le |L| \}$, sorted in increasing order by the left endpoint of each interval.  
%Recall that  $\Back$ and $\Fore$ are the main forward and backward stepping in the PBWT.

\noindent
\textbf{Constructing sub-runs for $\Back$ queries.}  
The construction of $\subIB_j$ proceeds by induction. In the base case, we set $\subIB_1 = \inter_1$.  
For each $j$ from $2$ to $\width$, we construct $\subIB_j$ as follows: we apply the algorithm \texttt{normalization}$(I_p, I_q)$ (see Lemma~\ref{lem-interval-overlap-three}) with $I_p := \inter_j$ and $I_q := \BiFore_{j-1}(\subIB_{j-1})$. 
Recall that the function $\BiFore_{j-1}$  maps sub-intervals in column $j-1$ to sub-intervals in column $j$. 
The list of sub-intervals output by this algorithm is assigned to $\subIB_j$.  
See Figure~\ref{fig:back} for an illustration of the algorithm and an example.

\begin{figure}[!t]
  \centering
  \begin{subfigure}[b]{0.44\textwidth}
    \centering
    \includegraphics[width=\textwidth]{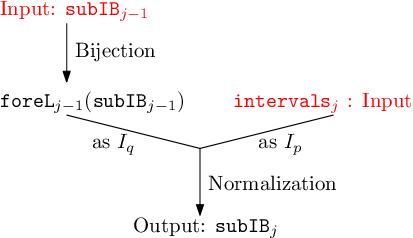}
    \caption{The scheme of building $\subIB_j$}
  \end{subfigure}
  \hfill
  \begin{subfigure}[b]{0.54\textwidth}
    \centering    \includegraphics[width=\textwidth]{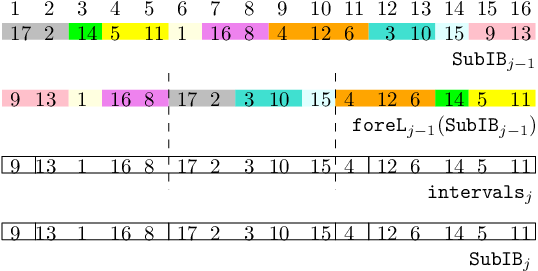}
    \caption{An Example of building $\subIB_j$}
  \end{subfigure}
  \caption{\label{fig:back} Illustration of the algorithm scheme for building sub-runs in $\subIB_j$ and an example. In this example, $\subIB_{j-1} = \{\,[1,2], [3,3], [4,5], [6,6], [7,8], [9,11], [12,13], [14,14], [15,16]\,\}$ and 
$\inter_j = \{\,[1,1], [2,11], [12,16]\,\}$.  
The bijective function $\BiFore_{j-1}$ maps the list $\subIB_{j-1}$ to the list 
$\{\,[1,2], [3,3], [4,5], [6,7], [8,9], [10,10], [11,13], [14,14], [15,16]\,\}$.  
Intervals highlighted in the same color contain the same haplotype indices and indicate corresponding pairs under this bijection.  
After applying the normalization algorithm to $\inter_j$ and $\BiFore_{j-1}(\subIB_{j-1})$, the intervals in $\inter_j$ are partitioned into  
$\subIB_j = \{\,[1,1], [2,5], [6,10], [11,11], [12,16]\,\}$.  
Each interval in $\subIB_j$ overlaps with at most three intervals in $\BiFore_{j-1}(\subIB_{j-1})$.}
\end{figure}

In view of the algorithm in Lemma \ref{lem-interval-overlap-three}, each interval in $\subIB_j$ for $1\le j\le \width$ is a sub-interval of some interval in $\inter_j$ and overlaps at most three intervals in $\BiFore_{j-1}(\subIB_{j-1})$.
Recall that every interval in $\inter_j$ corresponds to a run in $\col_j(\PBWT)$.
Hence, every interval in $\subIB_j$ corresponds to a sub-run in $\col_j(\PBWT)$.

\smallskip
\noindent
\textbf{Constructing sub-runs for $\Fore$ queries.}  
The construction of $\subIF_j$ is also performed by induction. In the base case, we set $\subIF_{\width} = \inter_{\width}$.  
Then, for $j$ from $\width - 1$ down to $1$, we construct $\subIF_j$ as follows:  
we apply the algorithm \texttt{normalization}$(I_p, I_q)$ (see Lemma~\ref{lem-interval-overlap-three}), where  
$I_p = \BiFore_j(\inter_j)$ and $I_q = \subIF_{j+1}$.  
Let $\subIF'_{j+1}$ be the list of intervals output by the algorithm; we then assign $\subIF_j = \BiBack_{j+1}(\subIF'_{j+1})$.  
Figure~\ref{fig:fore} (in Appendix \ref{app-fore}) illustrates the construction scheme along with an example.

% \begin{figure}[!t]
%   \centering
%   \begin{subfigure}[b]{0.44\textwidth}
%     \centering
%     \includegraphics[width=\textwidth]{./figs/fore}
%     \caption{The scheme of building $\subIF_j$}
%   \end{subfigure}
%   \hfill
%   \begin{subfigure}[b]{0.54\textwidth}
%     \centering
%     \includegraphics[width=\textwidth]{./figs/building-fore-runs-2}
%     \caption{An Example of building $\subIF_j$}
%   \end{subfigure}
%   \caption{\label{fig:fore} Illustration of the algorithm scheme for building sub-runs in $\subIF_j$ and an example. In this example, $\subIF_{j+1} = \{\,[1,2], [3,3], [4,5], [6,7], [8,9], [10,10], [11,13], [14,14], [15,16]\,\}$ and 
% $\inter_j = \{\,[1,5], [6,6], [7,16]\,\}$.  
% The bijective function $\BiFore_{j}$ maps the list $\inter_{j}$ to the list 
% $\{\,[1,1], [2,11], [12, 16]\,\}$.  
% Intervals highlighted in the same color contain the same haplotype indices and indicate corresponding pairs under this bijection.  
% After applying the normalization algorithm to $\BiFore_j(\inter_j)$ and $\subIF_{j+1}$, the intervals in $\BiFore_j(\inter_j)$ are partitioned into  
% $\subIF'_j = \{\,[1,1], [2,5], [6,10], [11,11], [12,16]\,\}$.  
% Each interval in $\subIF'_{j+1}$ overlaps with at most three intervals in $\subIF_{j+1}$.
% Finally, by setting \(\subIF_j = \BiBack_{j+1}(\subIF'_{j+1})\), the intervals \(\inter_j\) are partitioned into \(\subIF_j = \{[1,5], [6,6], [7,10], [11,15], [16,16]\}\).
% }
% \end{figure}

By the normalization algorithm shown in Lemma \ref{lem-interval-overlap-three}, each interval in $\subIF'_{j+1}$ for $1\le j\le \width$ is a sub-interval of some interval in $\BiFore_j(\inter_j)$ and overlaps at most three intervals in $\subIF_{j+1}$.
Recall that every interval in $\BiFore_j(\inter_{j})$ corresponds to a run in $\col_j(\PBWT)$.
Likewise, every interval in $\subIF'_{j+1}$ corresponds to a sub-run in $\col_j(\PBWT)$.
Since $\subIF_j = \BiBack_{j+1}(\subIF'_{j+1})$, every interval in $\subIF_j$ is a sub-run in $\col_j(\PBWT)$.

% \medskip
% \noindent
% \textbf{Bound the number of sub-runs created.}
% Let $\newR=\sum_{1\le j\le \width} r_j$ denote the total number of runs across all $\width$ columns of the matrix $\PBWT$.
% Next, we bound the total number of sub-runs to create.

\begin{lemma}\label{lem-size-sub-runs}
    We have $\sum_{1\le j\le \width} |\subIB_j|< 2\newR$ and $\sum_{1\le j\le \width} |\subIF_j|< 2\newR$.
\end{lemma}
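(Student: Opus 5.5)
The plan is to derive a one-step recurrence for the sizes of the sub-run lists from Lemma~\ref{lem-interval-overlap-three}, and then sum it over all columns. The sum telescopes with a factor $\frac12$, which gives the bound $2\newR$, and a leftover positive term makes the inequality strict.

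\textbf{Size preservation under the bijections.} For any list $L$ of sub-intervals of runs of column $j$, $\BiFore_j(L)$ has exactly $|L|$ intervals.
\begin{itemize}
\item Each interval of $L$ is mapped to a genuine interval, by Proposition~\ref{prop-lf-map}(b).
\item $\Fore[\cdot][j]$ is a permutation of $[1,\height]$, so disjoint intervals are mapped to disjoint intervals.
\item Hence the images are pairwise disjoint, still partition $[1,\height]$, and are in one-to-one correspondence with $L$.
\end{itemize}
The same argument applies to $\BiBack_{j+1}$. It acts on $\subIF'_{j+1}$, whose intervals are sub-intervals of images of runs of column $j$, so $\Back[\cdot][j+1]$ is again consecutive on each of them. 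I expect this to be the only delicate point, since the rest is arithmetic.

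\textbf{Recurrence for $\subIB_j$.} Write $B_j=|\subIB_j|$. Then $B_1=r_1$. For $j\ge 2$, apply Lemma~\ref{lem-interval-overlap-three} with $I_p=\inter_j$ and $I_q=\BiFore_{j-1}(\subIB_{j-1})$, where $|I_p|=r_j$ and $|I_q|=B_{j-1}$. This gives
\[
B_j\le r_j+\Big\lfloor \tfrac{B_{j-1}}{2}\Big\rfloor\le r_j+\tfrac{B_{j-1}}{2}.
\]
If $|I_q|<4$, no interval is split, so $B_j=r_j$ and the bound still holds. Summing over $j$ and writing $S=\sum_{j} B_j$ gives
\[
S\le \newR+\tfrac12\sum_{j=1}^{\width-1}B_j=\newR+\tfrac{S}{2}-\tfrac{B_\width}{2},
\]
so $S\le 2\newR-B_\width<2\newR$, because $B_\width\ge 1$.

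\textbf{Recurrence for $\subIF_j$.} Write $F_j=|\subIF_j|$. Then $F_\width=r_\width$. For $j<\width$, normalization is applied with $I_p=\BiFore_j(\inter_j)$, which has $r_j$ intervals, and $I_q=\subIF_{j+1}$. Lemma~\ref{lem-interval-overlap-three} gives $|\subIF'_{j+1}|\le r_j+F_{j+1}/2$. Since $\BiBack_{j+1}$ preserves the number of intervals, $F_j\le r_j+F_{j+1}/2$. The same summation, now indexed downward from $\width$, yields $\sum_j F_j\le 2\newR-F_1<2\newR$.
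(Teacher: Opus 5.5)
Your proof is correct and follows the paper's argument. You use the same recurrence $|\subIB_j|\le r_j+\lfloor |\subIB_{j-1}|/2\rfloor$, obtained from Lemma~\ref{lem-interval-overlap-three} together with $|\BiFore_{j-1}(\subIB_{j-1})|=|\subIB_{j-1}|$, and its mirror image for $\subIF_j$. The differences are cosmetic: you sum the recurrence directly to get $\sum_j|\subIB_j|\le 2\newR-|\subIB_\width|$ rather than unrolling it into a geometric series. You also spell out two points the paper leaves as ``similar'': that the count is preserved under $\BiBack_{j+1}$ on $\subIF'_{j+1}$, and the $\subIF$ case itself.
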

\begin{proof}
    Observe that it follows that $|\BiFore_j(\subIB_{j})|=|\subIB_{j}|$.
    Recall that $|\inter_j|=r_j$ is always true.
    In the base case, we have $|\subIB_1|=r_1$.
    For $1<j\le \width$, we have the recursion $|\subIB_{j}|\le |\inter_j|+\lfloor \frac{|\subIB_{j-1}|}{2}\rfloor\le r_j + \frac{|\subIB_{j-1}|}{2}$, in view of Lemma \ref{lem-interval-overlap-three} and the observation above.
    By solving the recursion, it follows that $|\subIB_j|=\frac{r_1}{2^{j-1}}+\frac{r_2}{2^{j-2}}+\dots+ \frac{r_j}{2^{j-j}}=\sum_{1\le\tau\le j} \frac{r_{\tau}}{2^{j-\tau}}$.
    Therefore, we have $\sum_{1\le j\le \width} |\subIB_j|\le \sum_{1\le j\le \width} \sum_{1\le\tau\le j} \frac{r_{\tau}}{2^{j-\tau}}< 2(r_1+r_2+\dots+r_\width)= 2\newR$.
    The bound on the total number of $|\subIF_j|$ for all $j$ can be computed similarly.
    This concludes the proof.
\end{proof}

% \begin{figure}[!t]
%     \centering
%     %\captionsetup{justification=centering}
%     \includegraphics[scale=1]{./figs/building-fore-runs}
%     \caption{
% Example illustrating the construction of $\subIF_j$ from $\subIF_{j+1}$ and $\inter_j$.  
% In this example, $\subIF_{j+1} = \{\,[1,2], [3,3], [4,5], [6,7], [8,9], [10,10], [11,13], [14,14], [15,16]\,\}$ and 
% $\inter_j = \{\,[1,5], [6,6], [7,16]\,\}$.  
% The bijective function $\BiFore_{j}$ maps the intervals in $\inter_{j}$ to the list 
% $\{\,[1,1], [2,11], [12,16]\,\}$.  
% Intervals highlighted in the same color indicate corresponding pairs under this bijection.  
% After applying the normalization algorithm to $\BiFore_{j}(\inter_j)$ and $\subIF_{j+1}$, the intervals in $\BiFore_{j}(\inter_j)$ are partitioned into  
% $\subIF'_j = \{\,[1,1], [2,5], [6,10], [11,11], [12,16]\,\}$.  
% Each interval in $\subIF'_j$ overlaps with at most three intervals in $\subIF_{j+1}$.  
% By applying the inverse bijection, the intervals in $\inter_j$ are partitioned into  
% $\subIF_j = \{\,[1,5], [6,6], [7,10], [11,15], [16,16]\,\}$.
% }
%     \label{fig:building-runs-fore}
%   \end{figure}

\section{The Data Structure for Constant-Time $\Fore/\Back$ Queries}
\label{sect-ds-fore-back}

In this section, we design the data structure constructed over these sub-runs to support $\Fore/\Back$ queries and access to entries in the matrix $\PBWT$.

\begin{theorem}\label{theorem-PBWT-r}
There exists a data structure of $O(\newR)$ words, constructed over $\subIB_j$ and $\subIF_j$ for $1 \le j \le \width$, that supports each of the following operations in constant time---without accessing the original matrix $M$, its $\PBWT$, or its prefix arrays $\PA$:
\begin{itemize}
    \item Given an index $i' \in [1..\height]$ and the index of the interval in $\subIB_{j'}$ that contains $i'$, one can find $\Back[i'][j']$, determine the index of the interval in $\subIB_{j'-1}$ containing $\Back[i'][j']$, and retrieve $\col_{j'}(\PBWT)[i']$.
    \item Given an index $i' \in [1..\height]$ and the index of the interval in $\subIF_{j'}$ that contains $i'$, one can find $\Fore[i'][j']$, determine the index of the interval in $\subIF_{j'+1}$ containing $\Fore[i'][j']$, and retrieve $\col_{j'}(\PBWT)[i']$.
\end{itemize}
\end{theorem}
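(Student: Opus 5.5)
The plan is a move-structure-style table: for every sub-run store its start, its image under stepping, the symbol, and a pointer to the target-column sub-run that contains the image of its start. The three-overlap constraint then bounds how far that pointer can be off for any interior position. I describe the $\Back$ case; the $\Fore$ case is symmetric.

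\textbf{Stored data for $\Back$.} For each column $j'$ and each interval $\mathsf{I}_\tau \in \subIB_{j'}$ (indexed by $\tau$ in sorted order), I store four words:
\begin{itemize}
\item its left endpoint $\mathsf{I}_\tau.b$;
\item the symbol $c_\tau$ shared by its positions in $\col_{j'}(\PBWT)$, which exists because $\mathsf{I}_\tau$ is a sub-run;
\item the value $\Back[\mathsf{I}_\tau.b][j']$;
\item the index $\pi_\tau$ of the interval of $\subIB_{j'-1}$ containing $\Back[\mathsf{I}_\tau.b][j']$.
\end{itemize}
By Lemma~\ref{lem-size-sub-runs} this costs $O(\newR)$ words in total, and the $\Fore$ tables are built the same way over $\subIF_{j'}$.

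\textbf{Answering a $\Back$ query.} Given $i'\in\mathsf{I}_\tau$, retrieving $\col_{j'}(\PBWT)[i']=c_\tau$ is immediate.

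\emph{Computing $\Back[i'][j']$.} By construction (normalization), $\mathsf{I}_\tau$ is contained in a run of column $j'$; I do not need it to lie inside a single interval of $\BiFore_{j'-1}(\subIB_{j'-1})$. The point is that $\Back$ is the inverse of $\Fore$, and $\Fore$ maps each run of column $j'-1$ onto a contiguous block with consecutive values. The inverse statement I need is the $\Back$-analogue of Proposition~\ref{prop-lf-map}(b): for consecutive positions $i,i+1$ in column $j'$, $\Back[i+1][j'] = \Back[i][j']+1$ whenever $\col_{j'-1}(\PBWT)[\Back[i][j']] = \col_{j'-1}(\PBWT)[\Back[i+1][j']]$. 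This holds because rows with equal symbols in column $j'-1$ are stably sorted, so column $j'$ is a concatenation of blocks, one per symbol, each preserving the column-$(j'-1)$ order.

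This consecutiveness fails across the boundary of an interval of $\BiFore_{j'-1}(\subIB_{j'-1})$. Inside one such interval, however, the positions are images of a single sub-run of column $j'-1$, so consecutiveness holds there. Hence $\mathsf{I}_\tau$, which meets at most three such intervals, is not obviously mapped by a single offset.

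To fix this, I would additionally store, for each $\mathsf{I}_\tau$, the at most two split points inside it (the left endpoints of the intervals of $\BiFore_{j'-1}(\subIB_{j'-1})$ that it meets), together with the $\Back$ value and the $\subIB_{j'-1}$-index for each piece. That is at most three pieces per sub-run, still $O(\newR)$ words. The query finds its piece with at most two comparisons and returns $\Back[i'][j'] = \Back[\text{piece start}][j'] + (i' - \text{piece start})$.

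\emph{Finding the containing interval.} The target interval index is then exactly the stored index for that piece, because each piece is the $\BiFore_{j'-1}$-image of a subset of one interval of $\subIB_{j'-1}$. So no walking is needed, and all steps take $O(1)$ time.

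\textbf{Answering a $\Fore$ query.} For $i'\in\mathsf{J}\in\subIF_{j'}$, the whole sub-run maps consecutively under $\Fore$ by Proposition~\ref{prop-lf-map}(b), onto $\BiFore_{j'}(\mathsf{J})\in\subIF'_{j'+1}$. That image overlaps at most three intervals of $\subIF_{j'+1}$. I store the image start and the at most two internal boundaries of $\subIF_{j'+1}$ within it, each with its interval index. The query computes $\Fore[i'][j']=\Fore[\mathsf{J}.b][j']+(i'-\mathsf{J}.b)$ and compares against the at most two boundaries to obtain the index.

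\textbf{Main obstacle.} The delicate point is the $\Back$ direction: arguing that each stored piece really maps by a single offset. This means checking that $\BiFore_{j'-1}(\subIB_{j'-1})$ partitions $[1,\height]$ and that positions within one of its intervals have consecutive $\Back$ values. That is where I would spend care, formalizing it via the stable-sort argument behind Proposition~\ref{prop-lf-map}.
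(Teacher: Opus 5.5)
Your final scheme is the same as the paper's. For each sub-run of $\subIB_{j'}$ you store its at most three pieces, cut by the intervals of $\BiFore_{j'-1}(\subIB_{j'-1})$, each with its $\Back$ offset and target index; these are the paper's quadruples $(\tilde{s},\tilde{t},s,\lambda)$. For each sub-run of $\subIF_{j'}$ you store the image start plus the at most three overlapped intervals of $\subIF_{j'+1}$ with their indices (the paper's quintuples), and in both cases you store the per-sub-run symbol.

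One small correction: your stated $\Back$-analogue of Proposition~\ref{prop-lf-map}(b) is false as written. Equal symbols at $\Back[i][j']$ and $\Back[i+1][j']$ only make them consecutive occurrences of that symbol in $\col_{j'-1}(\PBWT)$, not adjacent positions. You never need it, though: the fact you actually use is that positions inside one interval of $\BiFore_{j'-1}(\subIB_{j'-1})$ are images of a single sub-run and so have consecutive $\Back$ values, and that follows directly from Proposition~\ref{prop-lf-map}(b).
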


\begin{proof}
We first describe the data structure for $\Back$ queries.
For each column $j$ of the \PBWT, we construct the list $\subIB_j$ as described in Section~\ref{sect-sub-run-back} (see Constructing sub-runs for back queries).
Observe that each interval in $\subIB_j$ overlaps at most three intervals in $\BiFore_{j-1}(\subIB_{j-1})$.  Moreover, each interval $\tilde{\mathsf{I}} \in \BiFore_{j-1}(\subIB_{j-1})$ corresponds to the interval $[\Back[\tilde{\mathsf{I}}.b][j], \Back[\tilde{\mathsf{I}}.e][j]]$ belonging to $\subIB_{j-1}$.
For every $1 < j \le \width$ and each interval $\subIB_j[\tau]$ with $1 \le \tau \le |\subIB_j|$, we store a list $B^{\tau}_{j}$ of quadruples $(\tilde{s}, \tilde{t}, s, \lambda)$, where:
\begin{itemize}
    \item $[\tilde{s}, \tilde{t}]$ is a distinct interval in $\BiFore_{j-1}(\subIB_{j-1})$ that overlaps $\subIB_j[\tau]$;
    \item $s = \Back[\tilde{s}][j]$; and
    \item $\lambda$ is the index of the interval in $\subIB_{j-1}$ whose left endpoint is $s$.
\end{itemize}
Since each $\subIB_j[\tau]$ overlaps at most three intervals in $\BiFore_{j-1}(\subIB_{j-1})$, the list $B^{\tau}_{j}$ contains at most three tuples.
%After building all lists $B^{\tau}_{j}$, we discard the lists $\subIB_j$ to save space.
By Lemma~\ref{lem-size-sub-runs}, $\sum_j |\subIB_j| < 2\newR$, so the total space usage is $O(\newR)$ words.

Let $x$, given in a query, denote the index of the sub-run in $\subIB_{j'}$ containing $i'$.
To answer a query $\Back[i'][j']$, we search in the list $B^x_{j'}$ for the quadruple $(\tilde{s}, \tilde{t}, s, \lambda)$ satisfying $i' \in [\tilde{s}, \tilde{t}]$.
Then $\Back[i'][j'] = i' - \tilde{s} + s$, and $\lambda$ is the index of the interval in $\subIB_{j'-1}$ containing $\Back[i'][j']$.
We return $(i' - \tilde{s} + s, \lambda)$ as the result.
Since $|B^x_{j'}| \le 3$, each query takes $O(1)$ time.
See Figure~\ref{fig:ds-runs-back} for an illustration of the data structure and algorithm.

\begin{figure}[!t]
    \centering
    %\captionsetup{justification=centering}
    \includegraphics[scale=1]{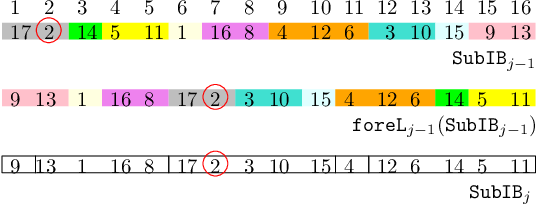}
\caption{
Example illustrating the data structure and algorithm for $\Back$ queries.  
The third interval $[6,10]$ in $\subIB_j$ overlaps three intervals in $\BiFore_{j-1}(\subIB_{j-1})$: $[6,7]$, $[8,9]$, and $[10,10]$.  
These correspond to the first $[1,2]$, seventh $[12,13]$, and eighth $[14,14]$ intervals in $\subIB_{j-1}$, respectively.  
Thus, the data structure$B^3_j$ stores the quadruples $\{(6,7,1,1), (8,9,12,7), (10,10,14,8)\}$ in the form $(\tilde{s}, \tilde{t}, s, \lambda)$.  
For a query $\Back[7][j]$ with index $3$, the algorithm finds $(\tilde{s}:6, \tilde{t}:7, s:1, \lambda:1)$ in $B^3_j$ (since $7 \in [6,7]$) and returns $7-\tilde{s}+s = 2$ and $\lambda = 1$.
}
    \label{fig:ds-runs-back}
  \end{figure}

\medskip
\noindent\textbf{Accessing $\col_{j'}(\PBWT)[i']$ from $\subIB$ lists.}
We store, for each $1 \le j \le \width$, an array $\valB_j$ of length $|\subIB_j|$, where each entry $\valB_j[\tau]$ is set to $\col_j(\PBWT)[\subIB_j[\tau].b]$.
This requires $O(r_j)$ words per column, and $O(\newR)$ words overall.
Given the index $x$ of the sub-run containing $i'$ in $\subIB_{j'}$, we have $\col_{j'}(\PBWT)[i'] = \valB_{j'}[x]$, which can be retrieved in $O(1)$ time.

\medskip
\noindent\textbf{Data structure for $\Fore$ queries.}
We construct the lists $\subIF_j$ for each column $j$ as described in Section~\ref{sect-sub-run-back} (see Constructing sub-runs for fore queries).
Each interval $\mathsf{I} \in \subIF_j$ corresponds to $[\Fore[\mathsf{I}.b][j], \Fore[\mathsf{I}.e][j]] \in \BiFore_j(\subIF_j)$, and each interval in $\BiFore_j(\subIF_j)$ overlaps at most three intervals in $\subIF_{j+1}$.
For every $1 \le j < \width$ and each interval $\subIF_j[\tau]$, we store a list $F^{\tau}_{j}$ of quintuplets $(s', \tilde{s}, s, t, \lambda)$, where:
\begin{itemize}
    \item $s' = \subIF_j[\tau].b$, $\tilde{s} = \Fore[s'][j]$,
    %\item $\tilde{s} = \Fore[s'][j]$,
    \item $[s, t]$ is a distinct interval in $\subIF_{j+1}$ overlapping $[\Fore[\subIF_j[\tau].b][j], \Fore[\subIF_j[\tau].e][j]]$,
    \item $\lambda$ is the index of $[s, t]$ in $\subIF_{j+1}$.
\end{itemize}
Since each such interval, i.e., $[\Fore[\subIF_j[\tau].b][j], \Fore[\subIF_j[\tau].e][j]]$, overlaps at most three intervals in $\subIF_{j+1}$, each $F^{\tau}_{j}$ has at most three tuples.
%After building all lists $F^{\tau}_{j}$, we discard the lists $\subIF_j$.
By Lemma~\ref{lem-size-sub-runs}, the total space is again $O(\newR)$ words.

Let $x$ denote the index of the sub-run containing $i'$ in $\subIF_{j'}$.
To answer $\Fore[i'][j']$, we search $F^x_{j'}$ for the quintuple $(s', \tilde{s}, s, t, \lambda)$ satisfying $i' - s' + \tilde{s} \in [s, t]$.
Then $\Fore[i'][j'] = i' - s' + \tilde{s}$, and $\lambda$ is the index of the interval in $\subIF_{j'+1}$ containing $\Fore[i'][j']$.
We return $(i' - s' + \tilde{s}, \lambda)$ as the result.
Since $|F^x_{j'}| \le 3$, each query takes $O(1)$ time.
An illustration of the data structure and the algorithm is depicted in Figure~\ref{fig:ds-runs-fore} (in Appendix~\ref{app-ds-runs-fore}).

% \begin{figure}[!t]
%     \centering
%     \includegraphics[scale=1]{./figs/ds-fore-runs-2}
% \caption{Example illustrating the data structure and algorithm for $\Fore$ queries. 
% The fourth interval $[11,15]$ in $\subIF_j$ corresponds to the third interval $[6,10]$ in $\BiFore_j(\subIF_j)$, which overlaps three intervals in $\subIF_{j+1}$: the fourth $[6,7]$, the fifth $[8,9]$, and the sixth $[10,10]$. 
% Accordingly, the data structure $F^4_j$ built for the interval $[11,15]$ stores the quintuples $\{(11,6,6,7,4), (11,6,8,9,5), (11,6,10,10,6)\}$ in the form $(s', \tilde{s}, s, t, \lambda)$. 
% Given a query $\Fore[14][j]$ with index $4$, the algorithm finds the tuple $(s'=11, \tilde{s}=6, s=8, t=9, \lambda=5)$ in $F^4_j$ (since $14 - s' + \tilde{s} = 14 - 11 + 6 = 9 \in [8,9]$) and returns $14 - s' + \tilde{s} = 9$ and $\lambda = 5$ as the answer.}
%     \label{fig:ds-runs-fore}
%   \end{figure}

\medskip
\noindent\textbf{Accessing $\col_j(\PBWT)[i]$ from $\subIF$ lists.}
Analogously, for each $1 \le j \le \width$, we store an array $\valF_j$ of length $|\subIF_j|$, where $\valF_j[\tau] = \col_j(\PBWT)[\subIF_j[\tau].b]$.
This requires $O(r_j)$ words per column, and $O(\newR)$ words in total.
Given the index $x$ of the sub-run containing $i$ in $\subIF_j$, we have $\col_j(\PBWT)[i] = \valF_j[x]$, retrievable in $O(1)$ time.
\end{proof}

\section{Applications}
\label{sect-app}

In this section, we present applications that rely on the iterative use of $\Fore/\Back$ queries.
%In Section~\ref{sect-access}, we establish the accessibility of $\newR$.
Section~\ref{sect-app-prefix-search} introduces a PBWT-based solution to prefix searches, while our method to haplotype retrieval is given in Section~\ref{sect-app-extract}.
Due to space constraints, all omitted proofs, pseudocode, and figures in this section are provided in Appendix \ref{app:app}.

Henceforth, set $\rho_j:=|\subIF_j|$ for every $1\le j\le \width$.
According to Theorem \ref{theorem-PBWT-r}, given $i, j$ and $x$, where $1\le j\le \width$, $1\le x\le \rho_j$, and $\subIF_j[x].b \le i \le \subIF_j[x].e$, we define $\ForeAll_j(i, x)$ that returns $(i', x')$ such that $i'=\Fore[i][j]$ and $\subIF_{j+1}[x'].b \le i' \le \subIF_{j+1}[x'].e$.
In other words, $\ForeAll_j(i, x)$ returns $\Fore[i][j]$ and the index of the sub-interval in $\subIF_{j+1}$ that contains $\Fore[i][j]$.

\subsection{A PBWT-Based Solution to Prefix Search}
\label{sect-app-prefix-search}

We first provide a solution under the assumption that all haplotypes $\{S_1,\dots, S_{\height}\}$ are of the same length $\width$ and then generalize it to the case where haplotypes are of arbitrary length.

Similarly as  \emph{sa-interval} in suffix arrays \cite{Manber1993-nn}, given a query pattern $P[1..m]$, we define $\paInterval_j$, with $2\le j\le m$,   as the maximal continuous range of indices in $\col_{j}(\PA)$ such that $P[1..j-1]$ is a prefix of $S_i$ for each $i\in \paInterval_j$.
In particular, $\paInterval_1$ is defined as $[1..\height]$, corresponding to the empty string $P[1..0]$.

\begin{theorem}\label{theorem-prefix-search-mem}
There exists an $O(\newR)$-word data structure constructed over an arbitrarily ordered list $\{S_1, \dots, S_\height\}$ of haplotypes of length $\width$ over the alphabet $\{0,\dots, \sigma-1\}$ such that, given a pattern $P[1..m]$, one can compute in $O(m' \log\log_{\word} \sigma)$ time:
\begin{enumerate}
    \item the longest common prefix $P[1..m']$ between $P[1..m]$ and any haplotype, 
    \item the number $\occ$ of haplotypes prefixed by $P[1..m']$, and 
    \item an index $i$ such that $S_i$ is prefixed by $P[1..m']$.
\end{enumerate}
If $\{S_1, \dots, S_\height\}$ are lexicographically sorted, then the data structure finds, in $O(m' \log\log_{\word} \sigma)$ time, the interval $[\gamma..\gamma']$ such that $S_i$ is prefixed by $P[1..m']$ for every $i \in [\gamma..\gamma']$.
\end{theorem}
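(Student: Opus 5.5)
We maintain, for $j=1,2,\dots$, the $\paInterval_j=[b_j,e_j]$ in $\col_j(\PA)$, which is contiguous because all haplotypes sharing the prefix $P[1..j-1]$ are consecutive in the co-lexicographic order of $\col_j(\PA)$. Note that the paper's $\PA$ is co-lexicographic; for pattern prefixes we therefore rely on the fact that extending by one symbol refines the previous group. Concretely, haplotypes with a common prefix $P[1..j-1]$ are contiguous in column $j$. Refining by $c=P[j]$, the subinterval of rows whose $\col_j(\PBWT)$ symbol equals $c$ is mapped by $\Fore$, using Proposition~\ref{prop-lf-map}, to a contiguous interval of column $j+1$. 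That interval is exactly $\paInterval_{j+1}$, since ties in the co-lex order are broken stably and sequences sharing $P[1..j]$ sort together.

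To compute the new endpoints we cannot simply apply $\Fore$ to $b_j$ and $e_j$, because those rows may carry symbols different from $c$. Instead we need the first and last rows in $[b_j,e_j]$ whose symbol is $c$, together with the indices of their sub-runs in $\subIF_j$. For this we store, per column $j$, the run heads of $\subIF_j$ grouped by symbol, with a predecessor/successor structure (Lemma~\ref{lem-pred}) over the starting positions of sub-runs labelled $c$. Total size is $O(\rho_j)$, hence $O(\newR)$ overall by Lemma~\ref{lem-size-sub-runs}. The query at step $j$ proceeds as follows.

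\begin{enumerate}
\item If $\valF_j[x_b]=c$, where $x_b$ is the sub-run of $b_j$, the first row is $b_j$ itself. Otherwise, find the successor sub-run with symbol $c$ starting after $b_j$ and take its head.
\item Find the last row symmetrically from $e_j$ via a predecessor query.
\item If no such row lies in $[b_j,e_j]$, stop with $m'=j-1$.
\item Otherwise apply $\ForeAll_j$ to both endpoints, obtaining $[b_{j+1},e_{j+1}]$ together with the sub-run indices needed for the next step.
\end{enumerate}

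To achieve $O(\log\log_\word\sigma)$ rather than $O(\log\log_\word \height)$ per step, I would replace positional predecessor with a per-sub-run pointer: for each sub-run $\tau$ of $\subIF_j$ and each symbol, storing the next sub-run of that symbol costs too much. Instead, I would store, for each sub-run, its rank among sub-runs of the same symbol, and use rank/select (Lemma~\ref{lem-rank-select}) on the sequence $\valF_j[1..\rho_j]$. Then $\rank_c$ on $x_b$ or $x_e$ costs $O(\log\log_\word\sigma)$, and $\select_c$ gives the neighbouring sub-run index in $O(1)$, whose head $\subIF_j[\cdot].b$ we store explicitly. This is the main technical point, and it keeps the space at $O(\sum_j\rho_j)=O(\newR)$ words.

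Outputs follow directly. $\occ=e_{m'+1}-b_{m'+1}+1$. For the witness index $i$, we do not store $\PA$. Instead, at each step we carry a sample: we keep the row $b_{j}$ tracked, and we need the haplotype identity of some row in the final interval. For this, I would store, for every sub-run head in each column, nothing extra; instead store one value per column, $\col_{\width+1}$-style, namely the $\PA$ value at the top of each sub-run of the last column. We then walk the chosen row forward with $\ForeAll$ to column $\width$ and read its haplotype id. This costs $O(\width)$, which is too slow. The better option, and the one I would pursue, is a toehold in the style of the $r$-index: store $\col_j(\PA)$ at each sub-run boundary, $O(\newR)$ words, and maintain the $\PA$ value of the tracked endpoint. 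When the endpoint jumps to a sub-run head, we read its stored value; when it stays, $\Fore$ preserves the haplotype. This gives item 3 in $O(1)$ extra per step.

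For the lexicographically sorted case, the rows with prefix $P[1..m']$ are contiguous indices $[\gamma,\gamma']$ of the input. Since the haplotype ids are monotone within the block, $\gamma$ and $\gamma'$ are the minimum and maximum ids, which are exactly the ids at the two tracked endpoints if we maintain toeholds at both ends. The step I expect to be the main obstacle is verifying that the toehold values at both endpoints are correctly maintained when the endpoint moves to a different sub-run head.
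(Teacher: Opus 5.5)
Your proposal follows essentially the same route as the paper. Both use a per-sub-run symbol array with $\rank/\select$ (Lemma~\ref{lem-rank-select}) to find the first and last occurrence of $P[j]$ inside the current $\paInterval$, and apply $\ForeAll_j$ to those two rows. Both also keep $\PA$ samples at sub-run heads and maintain them as a toehold. The step you flag as the main obstacle is exactly Lemma~\ref{lem-index-store}, and it is routine. If the endpoint stays, $\col_j(\PA)[b]=\col_{j+1}(\PA)[\Fore[b][j]]$ holds by definition. If it moves, it moves to the head of the next $c$-sub-run, because the preceding row carries a symbol other than $c$, so that sample is stored.

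There is one concrete boundary failure. Your count $\occ=e_{m'+1}-b_{m'+1}+1$ requires applying $\ForeAll$ at the last matched column. When $m'=\width$ (for instance $m=\width$ and $P$ equals some haplotype), no column $\width+1$ exists: $\Fore[\cdot][\width]$ and $\subIF_{\width+1}$ are undefined. Your structure also stores only ranks among sub-runs, not row-level counts, so it cannot recover $\occ$ inside column $\width$. The paper avoids this by not stepping at $j=m$. Instead it stores $\sCountS_j[\tau]=\rank_{c}(\col_j(\PBWT),\subIF_j[\tau].b)$ for the symbol $c$ of sub-run $\tau$. Then $\occ=\rank_c(\col_m(\PBWT),\tilde e)-\rank_c(\col_m(\PBWT),\tilde b)+1$ is obtained in $O(1)$ from the two sub-run indices and offsets. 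Adding these counts, or a sentinel column, repairs your argument.

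In the sorted case you also do not need a second toehold at the bottom endpoint, which would require $\PA$ samples at sub-run tails. You assert without proof that ids increase within each $\paInterval$. This follows from column~1 being the identity and from Proposition~\ref{prop-lf-map}(a). Given it, the top toehold is $\gamma$, and the paper simply sets $\gamma'=\gamma+\occ-1$.
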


In the remainder of this section, we prove Theorem~\ref{theorem-prefix-search-mem}.

\smallskip
\noindent
\textbf{The data structures.}
The construction of the data structure proceeds as follows.
We first build the PBWT matrix $\PBWT$ and the prefix array matrix $\PA$ for the $\height$ haplotypes, each consisting of $\width$ columns and $\height$ rows.
Next, we compute the sub-runs $\subIF_j$ for all $1 \le j \le \width$, as described in Section~\ref{sect-sub-run-back}.
Over these sub-runs, we then construct the $O(\newR)$-space data structure from Theorem~\ref{theorem-PBWT-r} to support $\Fore$ queries efficiently.

We also build the following auxiliary arrays for $1\le j\le \width$. Recall that $\rho_j=|\subIF_j|$.
\begin{itemize}
\item $\sPA_j[1..\rho_j]$: each entry $\sPA_j[\tau]$ ($\tau \in [1..\rho_j]$) stores $\col_j(\PA)[\subIF_j[\tau].b]$;
\item $\sValS_j[1..\rho_j]$: each entry $\sValS_j[\tau]$ stores $\col_j(\PBWT)[\subIF_j[\tau].b]$;
\item $\sCountS_j[1..\rho_j]$: each entry $\sCountS_j[\tau]$ stores the number of occurrences of $\sValS_j[\tau]$ in $\col_j(\PBWT)[1..\subIF_j[\tau].b]$, that is, $\rank_{\sValS_j[\tau]}(\col_j(\PBWT), \subIF_j[\tau].b)$.
\end{itemize}
We then build the data structures for $\rank$ and $\select$ queries over $\sValS_j$ as in Lemma~\ref{lem-rank-select}.
After building these arrays and supporting structures, we discard $\PBWT$ and $\PA$.

The structure of Theorem~\ref{theorem-PBWT-r} occupies $O(\newR)$ words of space.
By Lemma~\ref{lem-size-sub-runs}, we have $\sum_j \rho_j = O(\sum_j r_j)=O(\newR)$.
Hence, storing all the arrays and auxiliary $\rank/\select$ structures requires $O(\sum_j \rho_j) = O(\newR)$ words in total.
Consequently, the overall space complexity of our data structure is $O(\newR)$ words.

% \medskip
% \noindent
% \textbf{Forward Queries.}
\medskip
\noindent
\textbf{The query algorithm.}
The algorithm \texttt{Partial Prefix Search($P[1..m]$)} proceeds iteratively by updating  a quintuple $(b, e, x, x', index)$, where 
\begin{itemize}
    \item $b$ and $e$, drawn from $\{1,\dots, \height\}$, store some row indices,
    \item $x$ and $x'$, drawn from $\{1,\dots, |\subIF_j|\}$ for some column $j$, always store indices that satisfy $b\in \subIF_j[x]$ and $e\in \subIF_j[x']$, respectively, and 
    \item $index$, drawn from $\{1,\dots, \height\}$, always stores $\col_j(\PA)[b]$.
\end{itemize}
% Initially, we set the quintuple such that
% \begin{itemize}
%     \item $[b, e]$ is $[1, h]$, which is $\paInterval_1$,
%     \item $x=1$ and $x'=|\subIF_1|$, and
%     \item $index:=\sPA_1[1]$.
% \end{itemize}
Initially, we set the quintuple as
$[b,e]=[1,h]=\paInterval_1$, $x=1$, $x'=|\subIF_1|$, and $index=\sPA_1[1]$.

In the \( j \)-th iteration (\( 1 \le j \le m \)), the following steps are performed:
\begin{itemize}
    \item We first determine the positions \(\tilde{b}\) and \(\tilde{e}\), 
    where \( b \le \tilde{b} \le \tilde{e} \le e \) denote the indices of the first and last occurrences of \( P[j] \) in \( \col_j(\PBWT) \) within the range \([b, e]\). 
    Formally, \(\tilde{b} = \min\{ \ell \mid b \le \ell \le e,\, \col_j(\PBWT)[\ell] = P[j] \}\) and 
    \(\tilde{e} = \max\{ \ell \mid b \le \ell \le e,\, \col_j(\PBWT)[\ell] = P[j] \}\). 
    If neither \(\tilde{b}\) nor \(\tilde{e}\) exists, the interval \(\paInterval_{j+1}\) 
    corresponding to the prefix \( P[1..j] \) is empty.  
    Otherwise, we locate the indices \(\tilde{x}\) and \(\tilde{x}'\) such that 
    \(\tilde{b} \in \subIF_j[\tilde{x}]\) and \(\tilde{e} \in \subIF_j[\tilde{x}']\), respectively.  
    In the pseudocode, we show that all of \(\tilde{b}\), \(\tilde{e}\), \(\tilde{x}\), and 
    \(\tilde{x}'\) can be obtained using \(\rank\) and \(\select\) queries 
    over the array \(\sValS_j\).  
    In addition, if \(\tilde{b} \ne b\), we update the variable 
    \(\textit{index} := \sPA_j[\tilde{x}]\).  
    We will later prove that \(\textit{index}\) always stores  \(\col_j(\PA)[b]\).
    \item Second, if \( j \le m \) and \(\paInterval_{j+1} \neq \emptyset\), 
    we apply the queries \(\ForeAll_j(\tilde{b}, \tilde{x})\) 
    and \(\ForeAll_j(\tilde{e}, \tilde{x}')\) according to 
    Theorem~\ref{theorem-PBWT-r}, and update 
    \((b, x) := \ForeAll_j(\tilde{b}, \tilde{x})\) and 
    \((e, x') := \ForeAll_j(\tilde{e}, \tilde{x}')\), respectively.  
    Consequently, for \( j < m \), it follows that \( b = \paInterval_{j+1}.b \) and 
    \( e = \paInterval_{j+1}.e \).  
    Thus, \(\paInterval_{j+1}\), corresponding to the prefix \( P[1..j] \), 
    is obtained and stored as \([b, e]\), with the values \(x\) and \(x'\) 
    updated accordingly.  
    If \( j = m \), we instead set \((b, x) := (\tilde{b}, \tilde{x})\) and 
    \((e, x') := (\tilde{e}, \tilde{x}')\).

    \item Third, we increment \( j \) by 1 and proceed to the next iteration.
\end{itemize}

The $j$-th iteration ends when   $\paInterval_{j+1}=\emptyset$ or when $j = m+1$.  
The following observation is crucial: in either case, the longest common prefix shared by $P[1..m]$ and any haplotype is $P[1..j-1]$.

If $\paInterval_{j+1}=\emptyset$,  then $[b, e]$ is $\paInterval_{j}$, corresponding to the prefix $P[1..j-1]$.
If $j=1$, then the longest common prefix shared by $P[1..m]$ and any haplotype is an empty string. 
Otherwise, the longest common prefix is $P[1..j-1]$ by the observation mentioned earlier, and the number of haplotypes prefixed by $P[1..j-1]$ is $e-b+1$ by the definition of $\paInterval_{j}$; therefore, we return $P[1..j-1], e-b+1$, and $index$ as the answer.

If instead $j = m+1$, it follows that $P[1..m]$ is the longest common prefix and that $b$ and $e$ store, respectively, the indices of the and last occurrences of $P[m]$ in $\col_m(\PBWT)[\hat{b}, \hat{e}]$, where $\hat{b}=\paInterval_m.b$ and $\hat{e}=\paInterval_m.e$.
In this case, we compute the numbers of occurrences of $P[m]$ in $\col_m(\PBWT)[1..b]$ and $\col_m(\PBWT)[1..e]$, which are $\sCountS_{m}[x] + b - \subIF_m[x].b$ and $\sCountS_{j}[x'] + e - \subIF_j[x'].b$, respectively, and store them in variables $count_1$ and $count_2$.
The number of haplotypes prefixed by $P[1..m]$ is $count_2-count_1+1$.
In the end, we return $P[1..m], count_2-count_1+1, index$ as the answer; recall that $index$ stores $\col_m(\PA)[b]$.
The pseudocode of the algorithm can be found in Appendix \ref{app-pseudo-prefix-search}.

We have shown that the procedure \texttt{Partial Prefix Search($P[1..m]$)} identifies the longest prefix $P[1..m']$ shared between $P[1..m]$ and any haplotype, and counts the number of haplotypes prefixed by $P[1..m']$. 
It remains to show that the variable \textit{index} returned by the procedure indeed stores the index of one of these haplotypes. 
To this end, it suffices to prove that the invariant $index = \col_j(\PA)[b]$ holds in each iteration $j$.

\begin{lemma}\label{lem-index-store}
    It follows that $index=\col_j(\PA)[b]$ in each iteration $j$ with $1\le j\le m$.
\end{lemma}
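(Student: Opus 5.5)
We prove the invariant by induction on $j$, following the algorithm. Fix the meaning precisely: at the start of iteration $j$, $b=\paInterval_j.b$ and $index=\col_j(\PA)[b]$. During the first step of iteration $j$, $b$ is conceptually replaced by $\tilde b$, and $index$ should then equal $\col_j(\PA)[\tilde b]$.

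\textbf{Base case.} For $j=1$ the algorithm sets $b=1$ and $index=\sPA_1[1]$. Since $\subIF_1[1].b=1$, we get $\sPA_1[1]=\col_1(\PA)[1]$, which is the claim.

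\textbf{Within one iteration.} Assume $index=\col_j(\PA)[b]$ at the start of iteration $j$. There are two cases.
\begin{itemize}
\item If $\tilde b=b$, $index$ is unchanged and already equals $\col_j(\PA)[\tilde b]$.
\item If $\tilde b\ne b$, the algorithm sets $index:=\sPA_j[\tilde x]=\col_j(\PA)[\subIF_j[\tilde x].b]$. We must show $\subIF_j[\tilde x].b=\tilde b$. This is the key step. Because $\tilde b$ is the first position in $[b,e]$ with $\col_j(\PBWT)[\tilde b]=P[j]$ and $\tilde b>b$, we have $\col_j(\PBWT)[\tilde b-1]\ne \col_j(\PBWT)[\tilde b]$, with $\tilde b-1\in[b,e]$. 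Hence $\tilde b$ starts a run of $\col_j(\PBWT)$. Every interval of $\subIF_j$ is a sub-run (Section~\ref{sect-sub-run-back}), so $\subIF_j$ refines $\inter_j$ and every run start is also a left endpoint of some interval of $\subIF_j$. Since $\tilde x$ indexes the interval containing $\tilde b$, that interval begins at $\tilde b$.
\end{itemize}

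\textbf{Transition to iteration $j+1$.} Suppose $j<m$ and the interval is nonempty. The algorithm sets $b:=\Fore[\tilde b][j]$ and leaves $index$ untouched. By definition of $\Fore$, $\col_{j+1}(\PA)[\Fore[\tilde b][j]]=\col_j(\PA)[\tilde b]$, which equals $index$ by the previous step. So the invariant holds at the start of iteration $j+1$. When $j=m$, the algorithm sets $b:=\tilde b$ in column $m$, and the within-iteration argument gives $index=\col_m(\PA)[b]$ directly. If the loop stops because $\paInterval_{j+1}=\emptyset$, then $b$ and $index$ are unchanged and the invariant for $\paInterval_j$ holds.

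\textbf{Main obstacle.} The only nontrivial point is the second case of the within-iteration step: that $\tilde b$ is a left endpoint of a $\subIF_j$ interval. This relies on $\subIF_j$ being a refinement of the run partition $\inter_j$, which follows from the normalization and $\BiBack$ construction producing sub-runs. We should also state explicitly that "$index=\col_j(\PA)[b]$" is read with $b$ as updated in the first step of each iteration.
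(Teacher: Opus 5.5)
Your proof is correct and follows essentially the same route as the paper: induction on $j$, splitting on whether $\col_j(\PBWT)[b]=P[j]$, then using $\sPA_j[\tilde x]=\col_j(\PA)[\tilde b]$ and $\col_{j+1}(\PA)[\Fore[\tilde b][j]]=\col_j(\PA)[\tilde b]$ for the transition (or $b:=\tilde b$ when $j=m$). The only difference is that you explicitly prove $\tilde b=\subIF_j[\tilde x].b$, using that $\tilde b$ starts a run and that $\subIF_j$ refines $\inter_j$; the paper takes this directly from the pseudocode, which defines $\tilde b$ as $\subIF_j[\tilde x].b$.
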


The proof of this lemma is deferred to Appendix \ref{app-index-store}, due to space limitations.

Moreover, by Proposition~\ref{prop-lf-map}(a), it follows that the variable $index$ returned by the algorithm stores the smallest index $\ell$ such that $S_{\ell}$ is prefixed by $P[1..m']$; that is, $index = \min \{\, \ell \mid 1 \le \ell \le \height \text{ and } P[1..m'] \text{ is a prefix of } S_{\ell} \,\}$.

\medskip
\noindent
\textbf{The running time analysis.} The while-loop in the algorithm performs at most $m$ iterations.
Recall that $m'$ denotes the length of the longest common prefix to return.
So, $\paInterval_{m'+2}$ does not exist, and at most $\min(m'+1, m)$ iterations are executed.
In each iteration $j$, the queries $\rank$ and $\select$ over the array $\sValS_j$, as well as the query $\ForeAll$ are called at most twice, respectively.
By Lemma \ref{lem-rank-select} and Theorem \ref{theorem-PBWT-r}, a $\rank$ query takes $O(\log \log_{\word} \sigma)$ time, a query $\select$ or $\ForeAll_j$ takes $O(1)$ time.
Hence, the overall running time of the algorithm is bounded by $O(m'\log \log_{\word} \sigma)$.

\medskip
\noindent
\textbf{When $\{S_1, S_2, \dots, S_{\height}\}$ are sorted lexicographically.}  
In this case, we first invoke the procedure \texttt{Partial Prefix Search($P[1..m]$)} to obtain $P[1..m']$, the count $\occ$, and the index $i$ of a haplotype that is prefixed by $P[1..m']$.  
We then compute the interval $[\gamma..\gamma']$ as $[i..i+\occ-1]$.
To verify correctness, observe that if $S_1, S_2, \dots, S_{\height}$ are presorted lexicographically, then the list $\{\gamma, \gamma+1, \dots, \gamma'-1, \gamma'\}$ forms a sub-list of $\col_j(\PA)$ for every $1 \le j \le \min(m'+1, m)$.  
Recall that $i = \min \{\, \ell \mid 1 \le \ell \le \height \text{ and } P[1..m'] \text{ is a prefix of } S_{\ell} \,\}$.
Hence, it follows that $i = \gamma$ and $\gamma' = i + \occ - 1$.

This completes the proof of Theorem~\ref{theorem-prefix-search-mem}.  
In Corollary~\ref{coro-sorted}, we extend this result to enumerate all haplotype indices prefixed by \( P[1..m'] \); the proof appears in Appendix~\ref{app-coro-sorted}.

\begin{corollary}\label{coro-sorted}
There is an $O(\newR + \height)$-word data structure built over an arbitrary ordered list $\{S_1, \dots, S_\height\}$ of haplotypes such that, given a pattern $P[1..m]$, %the following hold:
the indices of the haplotypes prefixed by $P[1..m']$ can be enumerated in $O(m' \log\log_{\word} \sigma + \occ)$ time, where $P[1..m']$ is the longest common prefix between $P[1..m]$ and any haplotype and $\occ$ denotes the number of haplotypes prefixed by $P[1..m']$.
% \begin{enumerate}
%     \item The longest common prefix $P[1..m']$ between $P[1..m]$ and any haplotype, as well as the number $\occ$ of haplotypes prefixed by $P[1..m']$, can be computed in $O(m' \log\log_{\word} \sigma)$ time.
%     \item The indices of the haplotypes prefixed by $P[1..m']$ can be enumerated in $O(m' \log\log_{\word} \sigma + \occ)$ time.
% \end{enumerate}
\end{corollary}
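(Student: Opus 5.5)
Keep the $O(\newR)$-word structure of Theorem~\ref{theorem-prefix-search-mem} over the original, arbitrarily ordered list. Add one explicit $\height$-word array that turns the output of \texttt{Partial Prefix Search} into a contiguous block of haplotype indices, so that enumeration costs $O(1)$ per reported index. The natural choice is $\col_{\width+1}$-style order, namely the co-lexicographic order of the full haplotypes. Since they all have length $\width$, it is more convenient to use the order in which haplotypes prefixed by a common string are contiguous: the lexicographic order. Let $\pi$ be the permutation that sorts $S_1,\dots,S_\height$ lexicographically, with ties broken by index. Store the array $L[1..\height]$ with $L[k]=\pi(k)$, the original index of the $k$-th smallest haplotype, and its inverse $L^{-1}$. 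This costs $O(\height)$ words, giving $O(\newR+\height)$ overall.

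Query algorithm:
\begin{enumerate}
\item Run \texttt{Partial Prefix Search($P[1..m]$)} from Theorem~\ref{theorem-prefix-search-mem}. In $O(m'\log\log_{\word}\sigma)$ time this returns $m'$, $\occ$, and an index $i$ with $S_i$ prefixed by $P[1..m']$.
\item Set $k=L^{-1}[i]$. The haplotypes prefixed by $P[1..m']$ form a contiguous block of length $\occ$ in $L$ that contains position $k$, but $k$ need not be its left end.
\item Locate the left end of the block.
\item Output $L[\gamma],\dots,L[\gamma+\occ-1]$, which takes $O(\occ)$ time.
\end{enumerate}

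\textbf{Main obstacle: step 3.} I see two ways to obtain the left end $\gamma$.

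The first is to show that $i$ is itself the first element of the block. The paper already argues, via Proposition~\ref{prop-lf-map}(a), that $index$ is the minimum original index among the qualifying haplotypes. What I need instead is that it is minimal in the chosen order. So I would build the structure over the lexicographically re-sorted list $S_{\pi(1)},\dots,S_{\pi(\height)}$ and not over the original one. This is legitimate because $\newR$ is defined per input list, and re-sorting can only change it. To keep the bound $O(\newR)$ with respect to the original $\newR$, I must show that the number of runs does not increase under sorting, and I am not confident that it does not.

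The second, fallback route avoids the re-sorting issue altogether. I would store $\col_{\width}(\PA)$ extended to the full co-lexicographic order, i.e.\ $\col_{\width+1}(\PA)$, where haplotypes ending with a common string are contiguous. Then I would run the search on the reversed haplotypes, again in the equal-length setting.

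The cleanest version keeps the original structure untouched and tracks $\paInterval$ explicitly. At the end of the search, $[b,e]$ is an interval of $\col_{j}(\PA)$ for the last iteration $j$. I store every full column $\col_j(\PA)$ only implicitly and recover entries by $\Fore$ steps: starting from row $b$ at column $j$, I step forward to column $\width$ and read $\col_\width(\PA)$. That costs $\width$ per occurrence, which is too slow.

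Instead I would store the $\height$-word array $A=\col_{\width+1}(\PA)$, the full colex order of the haplotypes. The point to verify is that, with equal-length haplotypes, the rows $[b,e]$ of column $m'+1$ map under repeated $\Fore$ to a contiguous interval of $A$. This fails in general, since later columns interleave. So the required contiguity must come from a lexicographic order, which brings me back to the first route.

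My committed plan is therefore the following.
\begin{enumerate}
\item Store $\pi$ and $\pi^{-1}$, using $O(\height)$ words.
\item Answer the query via the lexicographically sorted case of Theorem~\ref{theorem-prefix-search-mem}, obtaining $[\gamma..\gamma']$ in sorted coordinates.
\item Map each position back through $\pi$.
\end{enumerate}
The step I expect to be hardest is justifying that the run count $\newR$ of the sorted list is $O(\newR)$ of the original list, or alternatively finding a way to recover $\gamma$ from $k$ in $O(\log\log_{\word}\sigma)$ per step using stored sub-run information. If the run-count bound fails, I would try that alternative next.
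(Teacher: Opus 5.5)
Your committed plan (sort the haplotypes lexicographically, build the structure of Theorem~\ref{theorem-prefix-search-mem} over the sorted list, keep the sorting permutation in $O(\height)$ words, and report $\pi(\gamma),\dots,\pi(\gamma')$) is exactly the paper's proof; you need only $\pi$, not $\pi^{-1}$. However, the step you leave open is a real gap in your write-up. The sorted list has its own run count, say $\newR_{\mathrm{s}}$, and you must show $\newR_{\mathrm{s}}=O(\newR)$. The paper's proof asserts this without argument. Your hoped-for claim that sorting never increases the number of runs is false. For $S_1=01$, $S_2=00$, $S_3=10$, $S_4=11$ one gets $\col_1(\PBWT)=0011$ and $\col_2(\PBWT)=1001$, so $\newR=5$. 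The sorted list $00,01,10,11$ gives $\col_2(\PBWT)=0101$ and $\newR_{\mathrm{s}}=6$.

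What is true, and suffices, is $\newR_{\mathrm{s}}<3\newR$. Since $\col_j(\PA)$ is obtained from $\col_{j-1}(\PA)$ by a stable sort, it orders haplotypes co-lexicographically by $S_k[1..j-1]$, with ties broken by index. So column $j$ splits into blocks of haplotypes sharing the same $(j-1)$-prefix. The sequence of blocks and the multiset of $j$-th symbols in each block do not depend on the input order; only the order inside each block does. In the sorted list the $j$-th symbols inside each block are non-decreasing. Hence a block with $d$ distinct symbols has exactly $d-1$ internal run boundaries, which is the minimum over all orders. The two columns can therefore differ only at boundaries between consecutive blocks at least one of which has $d\ge 2$. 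If $k_j$ is the number of blocks with $d\ge 2$, there are at most $2k_j$ such boundaries, so $r^{\mathrm{s}}_j\le r_j+2k_j$. Every block with $d\ge 2$ forces an internal boundary in any order, so $r_j\ge 1+k_j$. This gives $r^{\mathrm{s}}_j<3r_j$, and summing over $j$ yields $\newR_{\mathrm{s}}<3\newR$. Alternatively, $\sum_j k_j$ counts the branching nodes of the trie of the haplotypes, which are fewer than the distinct haplotypes, so $\newR_{\mathrm{s}}<\newR+2\height$; this also suffices for the $O(\newR+\height)$ bound. With this inserted your proof is complete, and the detours through the co-lexicographic order of full haplotypes can be dropped.
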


We extend our solution to haplotypes of arbitrary lengths in Appendix~\ref{app-arbitrary-length}.

\subsection{Haplotype Retrieval within $O(\newR)$ Space}
\label{sect-app-extract}

\begin{theorem}\label{theorem-PBWT-extract}
The list of haplotypes $\{S_1, \ldots, S_\height\}$, each of length $\width$, 
can be represented in $O(\newR)$ words of space, 
allowing $S_i$ with any $1 \le i \le \height$ 
to be retrieved in $O(\width+\log\log_{\word} \height)$ time.
%If the alphabet size is bounded by $\word^{O(1)}$, then the query time is bounded by $O(\width)$.
\end{theorem}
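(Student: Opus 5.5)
The plan is to combine the $\Fore$ structure of Theorem~\ref{theorem-PBWT-r} with a single predecessor search at the first column. Since $\col_1(\PA)$ is the identity permutation $1,2,\dots,\height$, haplotype $S_i$ sits at row $i$ of column $1$. Hence $S_i[j]=\col_j(\PBWT)[i_j]$, where $i_1=i$ and $i_{j+1}=\Fore[i_j][j]$. Retrieving $S_i$ therefore amounts to walking forward from row $i$ in column $1$ to column $\width$, reading one symbol per column.

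The data structure has three parts.
\begin{itemize}
\item The $O(\newR)$-word structure of Theorem~\ref{theorem-PBWT-r} built over the lists $\subIF_j$, for $1\le j\le\width$, together with the arrays $\valF_j$.
\item A predecessor structure, as in Lemma~\ref{lem-pred}, over the sorted left endpoints $\{\subIF_1[\tau].b \mid 1\le\tau\le\rho_1\}$. These are integers from the universe $[1,\height]$.
\item The index $\tau$ attached to each stored left endpoint, so that a predecessor answer can be turned into a sub-run index.
\end{itemize}
The predecessor structure uses $O(\rho_1\log\height)$ bits, which is $O(\rho_1)$ words because $\height$ fits in a word. Since $\rho_1<2\newR$ by Lemma~\ref{lem-size-sub-runs}, the total space is $O(\newR)$ words.

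The query for $S_i$ proceeds as follows.
\begin{itemize}
\item A predecessor query for $i$ among the left endpoints returns the index $x_1$ of the sub-run of $\subIF_1$ containing $i$. This takes $O(\log\log_{\word}\height)$ time.
\item Then, for $j=1,\dots,\width$, we output $S_i[j]=\valF_j[x_j]$. For $j<\width$ we also compute $(i_{j+1},x_{j+1})=\ForeAll_j(i_j,x_j)$.
\end{itemize}
Each step costs $O(1)$ by Theorem~\ref{theorem-PBWT-r}, so the total time is $O(\width+\log\log_{\word}\height)$.

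Correctness uses two facts. First, $\valF_j[x_j]=\col_j(\PBWT)[i_j]$, because the sub-run $\subIF_j[x_j]$ contains $i_j$ and lies within a single run. Second, $\col_j(\PA)[i_j]=i$ for every $j$, which follows by induction from the definition of $\Fore$. Together with $\col_j(\PBWT)[i_j]=\col_j(M)[\col_j(\PA)[i_j]]$, these give $\valF_j[x_j]=S_i[j]$.

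The only nontrivial point is obtaining the initial sub-run index without $O(\height)$ space. This is exactly where the predecessor structure over $\subIF_1$ is needed, and it is the source of the additive $\log\log_{\word}\height$ term. Every step after the first is constant time.
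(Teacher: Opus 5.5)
Your proposal is correct and follows the paper's proof essentially step for step. Both use the $\Fore$ structure of Theorem~\ref{theorem-PBWT-r}, plus a predecessor structure (Lemma~\ref{lem-pred}) over the left endpoints of the sub-runs in $\subIF_1$ that gives the initial sub-run index in $O(\log\log_{\word}\height)$ time, followed by constant-time forward steps that read one symbol per column. Your use of the fact that $\col_1(\PA)$ is the identity, together with the invariant $\col_j(\PA)[i_j]=i$, simply makes explicit the correctness details the paper leaves implicit.
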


\begin{proof}
We construct the data structure, denoted by $DS$, as described in Theorem~\ref{theorem-PBWT-r} for supporting $\Fore$ queries.  
Let $p_{\tau}$, for $1\le \tau \le |\subIF_1|$, be the starting positions of the $\tau$-th sub-run in $\col_1(\PBWT)$.  
Clearly, these positions $p_1, p_2, \dots$ are sorted in increasing order. 
By Lemma \ref{lem-size-sub-runs}, the list consists of at most $2\newR$ positions.
We build the data structure from Lemma~\ref{lem-pred} over the list $p_1, p_2, \dots$ to support predecessor queries.  
Since these positions are drawn from the universe $\{1, \dots, \height\}$, each predecessor query can be answered in $O(\log \log_{\word} \height)$ time.  
The data structure requires $O(2\newR+\newR)=O(\newR)$ words of space.

Given a query position $i$, we apply Lemma~\ref{lem-pred} to find the index $x_1$ of the sub-run that contains $i$; that is, $x_1$ is the position of the predecessor of $i$ in the list $p_1, p_2, \dots$.  
Using $x_1$, we can retrieve $\col_1(\PBWT)[i]$ and $\Fore[i][1]$ in constant time via $DS$.  
The former gives the entry $S_i[1]$, while the latter provides both the position $i_2$---where $i$ is stored in $\col_2(\PA)$---and the index $x_2$ of the sub-run in $\col_2(\PBWT)$ that contains $i_2$.  
With $x_2$ and $i_2$, we can obtain the entry $S_i[2]$ and continue in the same manner for the subsequent columns.  
The algorithm terminates after all $\width$ entries of $S_i$ have been retrieved.

The predecessor query is invoked once, costing $O(\log \log_{\word} \height)$ time by Lemma \ref{lem-pred}.  
The $\Fore$ query is invoked $\width$ times, and exactly $\width$ entries of the matrix $\PBWT$ are accessed.  
Thus, the total query time is bounded by $O(\log \log_{\word} \height + \width)$.
\end{proof}

% \begin{corollary}\label{coro-extraction}
% If the size of the alphabet from which the haplotypes are drawn is bounded by $\word^{O(1)}$ and all haplotypes are distinct, then the query time for extraction is $O(\width)$.
% \end{corollary}

% \begin{proof}
%   Since all haplotypes are distinct, it follows that $\sigma^\width \ge \height$, which implies $\width \ge \log_{\sigma} \height$.  
% If $\sigma$ is bounded by $\word^{O(1)}$, then $\width = \Omega(\log_{\word} \height)$.  
% Therefore, the overall running time is $O(\log \log_{\word} \height + \width) = O(\width)$.  
% \end{proof}

\section{Conclusions}
\label{sect-conclusion}

In this work, we presented new $O(\newR)$-space data structures and algorithms that support efficient forward and backward stepping operations on the PBWT. 
Our data structure enables constant-time computation of both $\Fore$ and $\Back$ operations. 
We also established lower and upper bounds on $\newR$. 
To demonstrate the utility of the optimal-time mapping, we revisited two applications: prefix search and haplotype retrieval. 
For the former, we proposed an $O(\newR + \height)$--word data structure that answers each query in $O(m' \log\log_{\word} \sigma)$ time. 
When the haplotypes are provided in lexicographic order, the space requirement reduces to $O(\newR)$ words. 
Moreover, our PBWT-based approach to prefix search naturally extends to haplotypes of arbitrary length. 
For the latter, we designed an $O(\newR)$-word data structure that supports haplotype retrieval in $O(\width + \log\log_{\word} \height)$ time. 
While this work primarily  focuses on the theoretical approach for achieving constant-time mapping in the run-length encoded PBWT, it also opens new directions for practical implementations and other applications,  such as accelerating the computation of SMEMs \cite{Cozzi2023-iv} or MPSC in haplotype threading \cite{sanaullah2022haplotype, Bonizzoni2024-dj}. We leave these aspects for future work.
%\textcolor{red}{Future work includes implementing our data structures and algorithms to experimentally evaluate their practical performance. In particular, we plan to compare our approach with $\mu$-PBWT in terms of query time and space cost, providing empirical evidence of efficiency and scalability.}

%\pb{I am not convinced about the last two applications since a natural question would be why we did not consider these two in the paper. Moreover, shall we think that we can accelerate the computation of matching statistics.. or MPSC?...}
%\pb{The applications discussed in this paper show the potentiality of our data structure in  supporting PBWT-queries: as a future work we plan to extend it to the Minimal Positional Substring cover \cite{sanaullah2022haplotype, Bonizzoni2024-dj} using matching statistics revisited under the move data structure?.}
%Future work includes extending our framework to support additional PBWT-based queries such as matching statistics and Minimal Positional Substring Covers—under the same $O(\newR)$-space constraint. 
%It would also be interesting to further explore the relationship between $\newR$ and $r$. 
%Establishing an upper bound on $\newR$ in terms of $r$ could bring us closer to resolving the long-standing question of whether $r$ itself is accessible.

%%
%% Bibliography
%%

%% Please use bibtex, 

\bibliography{move-pbwt-ref}

\newpage
\appendix

\section{The Figure Omitted in Section \ref{sect-intro}}
\label{app-pbwt-pa-exp}
\begin{figure}[H]
  \centering
  \begin{subfigure}[b]{0.3\textwidth}
    \centering
    \includegraphics[width=\textwidth]{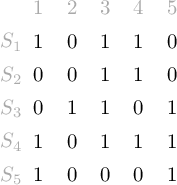}
    \caption{The Input haplotypes}
    \label{fig:hap-exp}
  \end{subfigure}
  \hfill
  \begin{subfigure}[b]{0.3\textwidth}
    \centering
    \includegraphics[width=\textwidth]{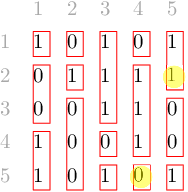}
    \caption{The PBWT with runs}
    \label{fig:pbwt-exp}
  \end{subfigure}
  \hfill
  \begin{subfigure}[b]{0.3\textwidth}
    \centering
    \includegraphics[width=\textwidth]{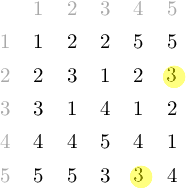}
    \caption{PA}
    \label{fig:pa-exp}
  \end{subfigure}
  \caption{Example of $\PBWT$ and $\PA$ built for bi-allelic haplotypes $\{S_1, S_2, S_3, S_4, S_5\}$. The operation $\Fore[5][4]$ returns $2$, and the position 2 is in the first run of the fifth column of the PBWT.}
  \label{fig:pbwt-pa-exp}
\end{figure}

\section{The Proofs Omitted in Section \ref{sect-prelim}}

\subsection{The Proof of Proposition \ref{prop-lf-map}}
\label{app-prop-lf-map}

\begin{proof}
When constructing the $j$-th column of $\PBWT$, note that the prefixes of the haplotypes up to column $j-1$ are stably sorted.
Therefore, statements (a) and (b) follow directly from this stable ordering.
\end{proof}

% \section{The Proofs Omitted in Section \ref{sect-bounds-newR}}
% \label{app-bounds-newR}

% \subsection{The Proof of Corollary \ref{prop-height-r}}

% \begin{proof}
%     The statement holds because the number of distinct haplotypes in $\{S_1, S_2, \dots, S_{\height}\}$ is at most $\height'' + 1$.
% \end{proof}

% \subsection{The Proof of Theorem \ref{theorem-upper-newR}}

% \begin{proof}
% By Lemmas~\ref{lemma-r-ell} and~\ref{lem-bound-ell_j}, we have $r_j \le \height'' + 1$ for all $j$.
% Therefore, $\newR = \sum_{1 \le j \le \width} r_j 
%     \le \sum_{1 \le j \le \width} (\height'' + 1)
%     = \width (\height'' + 1).$
% \end{proof}

\section{The Figure Omitted from Section \ref{sect-three-overlap-normal}}
\label{app:normal-alg}
\begin{figure}[H]
		\centering
		%\captionsetup{justification=centering}
		\includegraphics[scale=1]{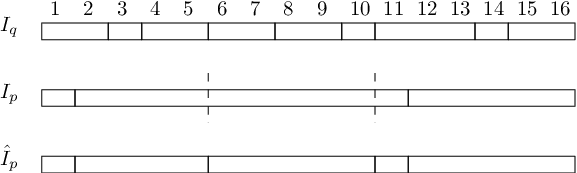}
		\caption{
Example of the normalization algorithm. 
Given $I_p = \{[1,1], [2,11], [12,16]\}$ and 
$I_q = \{[1,2], [3,3], [4,5], [6,7], [8,9], [10,10], [11,13], [14,14], [15,16]\}$, 
the algorithm outputs $\hat{I}_p = \{[1,1], [2,5], [6,10], [11,11],  [12,16]\}$, 
where each interval in $\hat{I}_p$ overlaps at most three intervals in $I_q$.
}
    \label{fig:normal-alg}
\end{figure}

\section{The Figure Omitted from Section \ref{sect-sub-run-back}}
\label{app-fore}
\begin{figure}[H]
  \centering
  \begin{subfigure}[b]{0.44\textwidth}
    \centering
    \includegraphics[width=\textwidth]{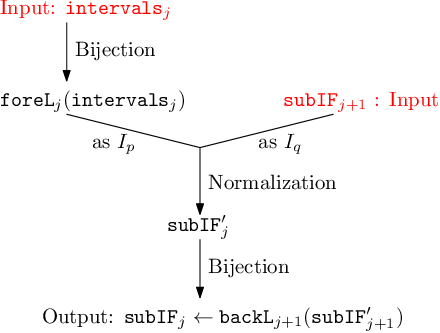}
    \caption{The scheme of building $\subIF_j$}
  \end{subfigure}
  \hfill
  \begin{subfigure}[b]{0.54\textwidth}
    \centering
    \includegraphics[width=\textwidth]{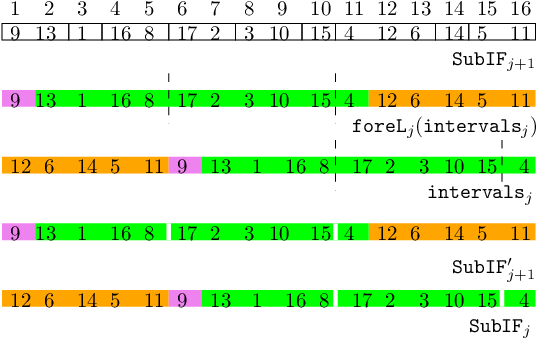}
    \caption{An Example of building $\subIF_j$}
  \end{subfigure}
  \caption{\label{fig:fore} Illustration of the algorithm scheme for building sub-runs in $\subIF_j$ and an example. In this example, $\subIF_{j+1} = \{\,[1,2], [3,3], [4,5], [6,7], [8,9], [10,10], [11,13], [14,14], [15,16]\,\}$ and 
$\inter_j = \{\,[1,5], [6,6], [7,16]\,\}$.  
The bijective function $\BiFore_{j}$ maps the list $\inter_{j}$ to the list 
$\{\,[1,1], [2,11], [12, 16]\,\}$.
Intervals highlighted in the same color contain the same haplotype indices and indicate corresponding pairs under this bijection.  
After applying the normalization algorithm to $\BiFore_j(\inter_j)$ and $\subIF_{j+1}$, the intervals in $\BiFore_j(\inter_j)$ are partitioned into  
$\subIF'_j = \{\,[1,1], [2,5], [6,10], [11,11], [12,16]\,\}$.  
Each interval in $\subIF'_{j+1}$ overlaps with at most three intervals in $\subIF_{j+1}$.
Finally, by setting \(\subIF_j = \BiBack_{j+1}(\subIF'_{j+1})\), the intervals \(\inter_j\) are partitioned into \(\subIF_j = \{[1,5], [6,6], [7,10], [11,15], [16,16]\}\).}
\end{figure}

\section{The Figure Omitted from Section \ref{sect-ds-fore-back}}
\label{app-ds-runs-fore}
\begin{figure}[H]
    \centering
    \includegraphics[scale=1]{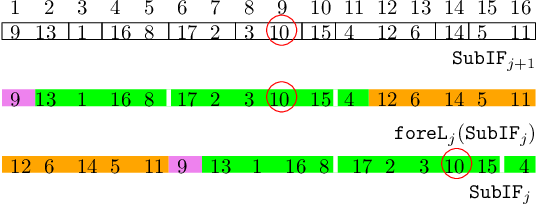}
\caption{Example illustrating the data structure and algorithm for $\Fore$ queries. 
The fourth interval $[11,15]$ in $\subIF_j$ corresponds to the third interval $[6,10]$ in $\BiFore_j(\subIF_j)$, which overlaps three intervals in $\subIF_{j+1}$: the fourth $[6,7]$, the fifth $[8,9]$, and the sixth $[10,10]$. 
Accordingly, the data structure $F^4_j$ built for the interval $[11,15]$ stores the quintuples $\{(11,6,6,7,4), (11,6,8,9,5), (11,6,10,10,6)\}$ in the form $(s', \tilde{s}, s, t, \lambda)$. 
Given a query $\Fore[14][j]$ with index $4$, the algorithm finds the tuple $(s'=11, \tilde{s}=6, s=8, t=9, \lambda=5)$ in $F^4_j$ (since $14 - s' + \tilde{s} = 14 - 11 + 6 = 9 \in [8,9]$) and returns $14 - s' + \tilde{s} = 9$ and $\lambda = 5$ as the answer.}
    \label{fig:ds-runs-fore}
  \end{figure}
  \newpage

\section{The Details Omitted in Section \ref{sect-app}}
\label{app:app}

\subsection{The Pseudocode for Prefix Searches}
\label{app-pseudo-prefix-search}

%\begin{figure*}[!h]
\texttt{Partial Prefix Search($P[1..m]$)} \\
01.\hspace*{1cm} $(b, e, x, x', index) \leftarrow (1, \height, 1, |\subIF_1|, \sPA_1[1])$; \\
02.\hspace*{1cm} $j \leftarrow 1$; \\
03.\hspace*{1cm} \textbf{while} $j \le m$ \textbf{do} \\
04.\hspace*{2cm} $c \leftarrow P[j]$; \\
05.\hspace*{2cm} $(\tilde{b}, \tilde{x}) \leftarrow (b, x)$; \\
06.\hspace*{2cm} \textbf{if} $\sValS_j[x] \ne c$ \textbf{then} \\
07.\hspace*{3cm} $count \leftarrow \rank_c(\sValS_j, x)$; \\
08.\hspace*{3cm} $\tilde{x} \leftarrow \select_c(\sValS_j, count+1)$; \\
09.\hspace*{3cm} \textbf{if} $\tilde{x} = |\subIF_j|+1$ \textbf{then} \\
10.\hspace*{4cm} \textbf{break}; \\
11.\hspace*{3cm} $\tilde{b} \leftarrow \subIF_j[\tilde{x}].b$; \\
12.\hspace*{3cm} $index \leftarrow \sPA_j[\tilde{x}]$; \\
13.\hspace*{2cm} $(\tilde{e}, \tilde{x}') \leftarrow (e, x')$; \\
14.\hspace*{2cm} \textbf{if} $\sValS_j[x'] \ne c$ \textbf{then} \\
15.\hspace*{3cm} $count \leftarrow \rank_c(\sValS_j, x')$; \\
16.\hspace*{3cm} $\tilde{x}' \leftarrow \select_c(\sValS_j, count)$; \\
17.\hspace*{3cm} $\tilde{e} \leftarrow \subIF_j[x'].e$; \\
18.\hspace*{2cm} \textbf{if} $j < m$ \textbf{then} \\
19.\hspace*{3cm} $(b, x) \leftarrow \ForeAll_j(\tilde{b}, \tilde{x})$; \\
20.\hspace*{3cm} $(e, x') \leftarrow \ForeAll_j(\tilde{e}, \tilde{x}')$; \\
21.\hspace*{2cm} \textbf{else}  \\
22.\hspace*{3cm} $(b, x) \leftarrow (\tilde{b}, \tilde{x})$; \\
23.\hspace*{3cm} $(e, x') \leftarrow (\tilde{e}, \tilde{x}')$; \\
24.\hspace*{2cm} $j \leftarrow j + 1$; \\
25.\hspace*{1cm} \textbf{if} $j = 1$ \textbf{then} \\
26.\hspace*{2cm} \textbf{return} the longest common prefix is empty; \\
27.\hspace*{1cm} \textbf{else if} $1<j\le m$ \\
28.\hspace*{2cm} \textbf{return} $P[1..j-1], e-b+1, index$; \\
29.\hspace*{1cm} \textbf{else}  \\
30.\hspace*{2cm} $count_1 \leftarrow \sCountS_{m}[x] + b - \subIF_m[x].b$; \\
31.\hspace*{2cm} $count_2 \leftarrow \sCountS_{m}[x'] + e - \subIF_m[x'].b$; \\
32.\hspace*{2cm} $count \leftarrow count_2 - count_1 + 1$; \\
33.\hspace*{2cm} \textbf{return} $P[1..m], count, index$; 
%\end{figure*}

\subsection{The Proof of Lemma \ref{lem-index-store}}
\label{app-index-store}

\begin{proof}
    We give a proof by induction.
    When $j=1$, the procedure sets $b:=1$ and $index:=\sPA_1[1]$.
    Since $\col_j(\PA)[b]=\col_j(\PA)[1]=\sPA_1[1]$, we have $index=\col_j(\PA)[b]$; the base case for $j=1$ holds trivially. 

    Assume by induction that at the beginning of the $j$-th iteration, we have $index=\col_{j}(\PA)[b]$.
    Consider the $j$-th iteration.
    Note that in the beginning of this iteration, we have $\sValS_j[x]=\col_j(\PBWT)[b]$, as $b\in \subIF_j[x]$.
    
    If $\sValS_j[x]=P[j]$, then $\col_j(\PBWT)[b]=P[j]$; in this case, the variable $index$ remains in this iteration, and the variable $b$ is set to $\Fore[b][j]$.
    Note that by the definition of $\Fore$ queries, we have $\col_{j}(\PA)[b]=\col_{j+1}(\PA)[\Fore[b][j]]$.
    By the inductive assumption, we have $index=\col_{j}(\PA)[b]$, thereby $index=\col_{j+1}(\PA)[\Fore[b][j]]$; therefore, at the beginning of the next iteration, the statement still holds.

    Otherwise, we have $\col_j(\PBWT)[b]\ne P[j]$; in this case, the procedure finds the index $\tilde{b}$ of the first occurrence of $P[j]$ in $\col_j(\PBWT)$ within the range $[b..e]$ and the index $\tilde{x}$ that satisfies $\tilde{b}= \subIF_j[\tilde{x}].b$, and sets $index:=\sPA_j[\tilde{x}]$.
    Since $\col_{j}(\PA)[\tilde{b}]=\sPA_j[\tilde{x}]$, we have $\col_{j}(\PA)[\tilde{b}]=index$.
    If $j=m$, then the procedure sets $b:=\tilde{b}$; therefore, we have $\col_m(\PA)[b]=\col_m(\PA)[\tilde{b}]=\sPA_m[\tilde{x}]=index$.
    When the while-loop terminates in the $m$-th iteration, we have $\col_m(\PA)[b]=index$.
    If $j<m$, then $b$ is set to $\Fore[\tilde{b}][j]$.
    By the definition of $\Fore$ queries, we have $\col_{j}(\PA)[\tilde{b}]=\col_{j+1}(\PA)[\Fore[\tilde{b}][j]]$.
    Recall that $\col_{j}(\PA)[\tilde{b}]=index$, so it follows that $index=\col_{j+1}(\PA)[\Fore[\tilde{b}][j]]=\col_{j+1}(\PA)[b]$.
    Thus, at the beginning of the next iteration, the statement  holds, completing the proof.
\end{proof}

\subsection{The Proof of Corollary \ref{coro-sorted}}
\label{app-coro-sorted}

\begin{proof}
We first sort all haplotypes in lexicographic order, build the data structure of Theorem~\ref{theorem-prefix-search-mem} over the sorted list, and store the permutation $\pi^{-1}(i)$ for $1 \le i \le \height$ such that $S_{\pi(1)} \prec S_{\pi(2)} \prec \dots \prec S_{\pi(\height)}$.
The resulting data structure uses $O(\newR+\height)$ words of space.

Given a query pattern $P[1..m]$, Theorem~\ref{theorem-prefix-search-mem} allows us to find the longest common prefix $P[1..m']$ in $O(m' \log\log_{\word} \sigma)$ time, as well as the interval $[\gamma, \gamma']$ such that $S_{\pi(i)}$ for any $\pi(i) \in [\gamma, \gamma']$ is prefixed by $P[1..m']$.  
The number $\occ$ of occurrences is then $\gamma' - \gamma + 1$, and the original indices, before sorting, of these haplotypes prefixed by $P[1..m']$ are $\{\pi^{-1}(\tau) \mid \tau \in [\gamma, \gamma']\}$.
Using the permutation $\pi^{-1}$ together with the interval $[\gamma, \gamma']$, the list of indices can be reported in $O(\occ)$ time.
\end{proof}

\subsection{Handling Haplotypes of Arbitrary Lengths}
\label{app-arbitrary-length}

We consider a general setting where the haplotypes $\{S_1, \dots, S_\height\}$ have arbitrary length.
We append a special symbol $\texttt{\#} \notin \{0, \dots, \sigma-1\}$---assumed to be the smallest---to the end of each haplotype.
Next, we construct the PBWT representation $\PBWT$ and the prefix arrays $\PA$ for the new haplotypes, which consist of $\height + \sum_{i \in [1..\height]} |S_i|$ symbols in total.

The construction of $\PBWT$ proceeds column by column from left to right, as in the case of haplotypes of equal length. 
In this setting, however, both $\PBWT$ and the $\PA$ are no longer matrices; instead, each consists of $\width$ columns of possibly different lengths, denoted by $\{\PBWT_1, \PBWT_2, \dots, \PBWT_{\width}\}$ and $\{\PA_1, \PA_2, \dots, \PA_{\width}\}$, respectively. 
Here, $\width$ denotes the maximum extended haplotype length among $\{S_1, \dots, S_\height\}$, that is, 
$\width = \max \{ |S_i| +1 \mid 1 \le i \le \height \}$. 
When constructing the entries of $\PBWT_j$ and $\PA_j$, we exclude any haplotype $S_i$ such that $j>|S_i|+1$.
Figure \ref{fig:pbwt-pa} illustrates an example of $\PBWT$ and $\PA$ built for haplotypes of arbitrary length.

\begin{figure}[!h]
  \centering
  \begin{subfigure}[b]{0.3\textwidth}
    \centering
    \includegraphics[width=\textwidth]{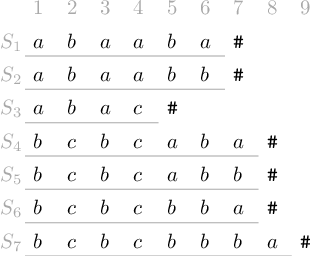}
    \caption{Extended haplotypes}
    \label{fig:e-haplotypes}
  \end{subfigure}
  \hfill
  \begin{subfigure}[b]{0.3\textwidth}
    \centering
    \includegraphics[width=\textwidth]{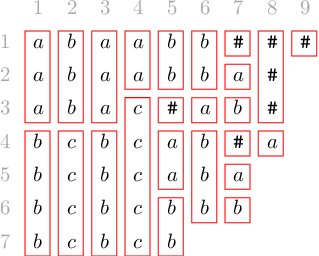}
    \caption{$\PBWT$}
    \label{fig:pbwt}
  \end{subfigure}
  \hfill
  \begin{subfigure}[b]{0.3\textwidth}
    \centering
    \includegraphics[width=\textwidth]{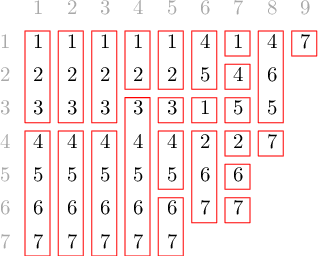}
    \caption{$\PA$}
    \label{fig:pa}
  \end{subfigure}
  \caption{Example of $\PBWT$ and $\PA$ built for haplotypes of arbitrary length.}
  \label{fig:pbwt-pa}
\end{figure}

The construction procedure described above ensures that each column $\PBWT_j$ stores a permutation of all entries $S_i[j]$ for every $1 \le i \le \height$ and $j \le |S_i|+1$.
Note that $S_i[|S_i|+1]=\texttt{\#}$.
As a result, the symbol $\texttt{\#}$ appears in $\PBWT$ exactly $\height$ times across all columns.
Let $\newR$ denote the total number of runs in $\PBWT$, and let $\newR'$ denote the number of runs excluding those composed entirely of $\texttt{\#}$.
Since $\texttt{\#}$ occurs $\height$ times in $\PBWT$, it follows that $\newR \le \newR' + \height$.

Upon the runs of $\PBWT$ for the new haplotypes trailing with $\texttt{\#}$, we build the same data structures as we have seen before for constructing sub-runs $\subIF_j$'s, for $\Fore$ queries, and for prefix searches (Corollary \ref{coro-sorted}).
The space cost is bounded by $O(\newR+\height)$ words, and a prefix search query can be answered in $O(m'\log \log_{\word} (\sigma+1)+\occ)$ time, since the new alphabet set is $\{0, \dots, \sigma-1\}\cup \{\texttt{\#}\}$.
Note that a prefix search query never calls $\Fore[i][j]$ for any $i$ and $j$ such that $\PBWT_j[i]=\texttt{\#}$.

\end{document}